\newtheorem{thm}{Theorem}
\newtheorem{prop}{Proposition}
\newtheorem{rem}{Remark}
\newtheorem{defn}{Definition}
\newtheorem{exmp}{Example}
\newtheorem{assum}{Assumption}
\begin{document}

\title{Deep Reinforcement Learning for Wireless Scheduling in Distributed Networked Control} 

\author{
	Gaoyang Pang, Kang Huang, Daniel E.\ Quevedo,~\IEEEmembership{Fellow,~IEEE}, \par 
	Branka Vucetic,~\IEEEmembership{Life Fellow,~IEEE}, Yonghui Li,~\IEEEmembership{Fellow,~IEEE}, Wanchun Liu,~\IEEEmembership{Senior Member,~IEEE}
	\vspace{-0.5cm}
    \thanks{G. Pang and K. Huang contributed equally to the work. The work of B. Vucetic was supported in part by the Australian Research Council Laureate Fellowship grant number FL160100032 and the ARC Discovery Project DP210103410. The work of Y. Li was supported by ARC under Grant DP250100462 and LP240100451. The work of W. Liu was supported by the Australian Research Council’s Discovery Early Career Researcher Award (DECRA) Project DE230100016. \textit{(Corresponding author: W. Liu.)}}
    \thanks{G. Pang, D. E. Quevedo, B. Vucetic, Y. Li, and W. Liu are with School of Electrical and Computer Engineering, The University of Sydney, Sydney, 2006, Australia (e-mail: \{gaoyang.pang, daniel.quevedo, branka.vucetic, yonghui.li, wanchun.liu\}@sydney.edu.au).}
    \thanks{K. Huang is with Huawei Shanghai Research Center, Shanghai, China, (e-mail: huangkang9@huawei.com).}
}
\maketitle

	
%

	
\begin{abstract}                 
		We consider a joint uplink and downlink scheduling problem of a fully distributed wireless networked control system (WNCS) with a limited number of frequency channels. Using elements of stochastic systems theory,  
		we derive a sufficient stability condition of the WNCS, which is stated in terms of both the control and communication system parameters.
		Once the condition is satisfied, there exists a stationary and deterministic scheduling policy that can stabilize all plants of the WNCS.
		By analyzing and representing the per-step cost function of the WNCS in terms of a finite-length countable vector state, we formulate the optimal transmission scheduling problem into a Markov decision process and develop a deep reinforcement learning (DRL) based framework for solving it.
        To tackle the challenges of a large action space in DRL, we propose novel action space reduction and action embedding methods for the DRL framework that can be applied to various algorithms, including Deep Q-Network (DQN), Deep Deterministic Policy Gradient (DDPG), and Twin Delayed Deep Deterministic Policy Gradient (TD3).
		Numerical results show that the proposed algorithm significantly outperforms benchmark policies. 
\end{abstract}
	
\begin{IEEEkeywords}                           
	Wireless networked control, transmission scheduling, deep Q-learning, Markov decision process, stability condition.
\end{IEEEkeywords}                 	
	\vspace{-1cm}
\section{Introduction}
The Fourth Industrial Revolution, Industry 4.0, is the automation of conventional manufacturing and
industrial processes through flexible mass production.
Automatic control in Industry 4.0 requires large-scale and interconnected deployment of massive spatially distributed industrial devices, such as sensors, actuators, machines, robots, and controllers. Eliminating the communication wires is a game-changer in reshaping traditional factories. In this regard, wireless networked control has become one of the most important technologies of Industry 4.0. It offers highly scalable and low-cost deployment capabilities and enables many industrial applications, such as smart cities, smart manufacturing, smart grids, e-commerce warehouses and industrial automation systems~\cite{ParkSurvey,KangJIoT}.

Unlike traditional cable-based networked control systems, in a large-scale wireless networked control system (WNCS), communication resources are limited.
Not surprisingly, during the past decade, state estimation and transmission scheduling in WNCS have drawn a lot of attention in the research community.
For example, some recent works have investigated the estimation problem under dynamic sampling periods~\cite{deng2023sequential} and random access protocol scheduling~\cite{liu2022distributed}.
Focusing on the sensor-controller communications only (the controller-actuator co-located scenario), the optimal transmission scheduling problem over a single frequency channel for achieving the best remote estimation quality was extensively investigated in~\cite{Tomlin,HAN2017260,leong2017sensor,DoubleThreshold}.
For the multi-frequency channel scenario, the optimal scheduling policy and structural results were obtained in~\cite{Wu2018Auto}.
The optimal transmission power scheduling problem of an energy-constrained remote estimation system was studied~in \cite{CaoEnergy21}.
The joint scheduling and power allocation problems of multi-plant-multi-frequency WNCSs were investigated in \cite{gatsis2015opportunistic,Eisen} for achieving the minimum overall transmission power consumption.
From a cyber-physical security perspective, the scheduling and control problems posed by denial‑of‑service (DoS) attackers were investigated in~\cite{WangICC22,JammerScheduleQin21,9346029}, for maximally deteriorating the WNCSs' performance.

In the WNCS examined in the above works, the scheduling problem is commonly reformulated into a Markov decision process (MDP) problem. The MDP problem can be solved using traditional algorithms (e.g., value iteration~\cite{Wu2018Auto,KangJIoT,KangTWC}, policy iteration~\cite{8642856}, and linear programming~\cite{chen2017delay,han2020joint}) and reinforcement learning algorithms (e.g., Q-learning~\cite{li2019multi,10944253} and deep Q-Networks~\cite{LEONG2020108759,DeepCAS}). Traditional algorithms and Q-learning are effective when the system scale is small.
For example, numerical results of the optimal scheduling of a two-sensor-one-frequency system using value iteration were presented in~\cite{Wu2018Auto}.
These methods involve operations over every possible state and action. As the number of actions increases, the number of state-action pairs grows, leading to a dramatic increase in the computations required. For example, when the action space is large, Q-learning necessitates constructing a Q-value table, which demands substantial storage proportional to the size of the action space, leading to the curse of dimensionality~\cite{li2019multi,zhu2017new,naeem2020generative}. Deep Q-Networks (DQN) address this issue using deep neural networks (DNNs) for value function approximations, thus eliminating the need for extensive Q-value tables.
Some recent works~\cite{LEONG2020108759,DeepCAS,10479170} have applied DQN to solve multi-system-multi-frequency scheduling problems in different WNCS scenarios.

However, handling large action spaces also presents significant challenges in DQN. The size of the DNNs required for DQN expands dramatically with respect to the action space. This can hinder effective exploration of the action space during training, resulting in increased storage and computational requirements. Ma et al.~\cite{ma2021hierarchical} proposed decoupling the original problem into subproblems and developing multiple deep reinforcement learning (DRL) algorithms to solve the corresponding subproblems with reduced decision space. However, the presence of multiple learning agents creates a non-stationary environment, which can destabilize learning and make convergence more difficult.
To deal with large decision spaces, recent works resort to stochastic policy-based DRL algorithms~\cite{ni2021multi,yang2019actor,hildebrandt2023opportunities}. A stochastic policy refers to a policy that specifies a probability distribution over actions, given the current state of the environment. Ni et al.~\cite{ni2021multi} leveraged a Graph Neural Network (GNN) based DRL to learn stochastic policies, i.e., Proximal Policy Optimization (PPO) for warehouse scheduling. Yang et al.~\cite{yang2019actor} developed an actor-critic DRL for learning stochastic policies with a continuous action space for scheduling, power allocation, and modulation scheme adaptation. However, it has been proved that the optimal policy is deterministic for unconstrained MDP problems~\cite{bertsekas2000dynamic}. Here, a deterministic policy means that the action in a given state is determined by a fixed function without any randomness. Consequently, deterministic policy-based DRL methods are generally preferable.

We note that most of the existing works focus on WNCSs which are partially distributed~\cite{Tomlin,HAN2017260,leong2017sensor,Wu2018Auto,gatsis2015opportunistic,Eisen,LEONG2020108759,DeepCAS}.
For example, Chen et al.~\cite{chen2023structure} focus on the sensor scheduling in the uplink transmission of a WNCS, using either DQN or Deep Deterministic Policy Gradient (DDPG). Redder et al.~\cite{redder2019deep} developed a DQN-based algorithm for optimal actuator scheduling in the downlink transmission of a WNCS over a Markov fading channel. Wang et al.~\cite{10632058} also developed a GNN-based multi-agent DRL for optimal actuator scheduling in a WNCS. Existing works consider only optimizing either the uplink or downlink transmission for WNCSs.
In a fully distributed setting, both downlink (controller-actuator) and uplink (sensor-controller) transmissions are crucial for stabilizing each plant.
The joint uplink and downlink scheduling problem of fully distributed multi-plant WNCS has never been considered in the open literature.

In this paper, we investigate the transmission scheduling problem of distributed WNCS. The main contributions are summarized as follows:
\begin{itemize}
	\item We propose a distributed $N$-plant-$M$-frequency WNCS model, where the controller schedules the uplink and downlink transmissions of all $N$ plants and the spatial diversity of different communication links is taken into account. The controller generates sequential predictive control commands for each of the plants based on pre-designed deadbeat control laws.\footnote{Note that deadbeat controller is commonly considered as a time-optimal controller that takes the minimum time for setting the current plant state to the origin.}
	Different from uplink or downlink only scheduling, a joint scheduling algorithm has a larger action space and also needs to automatically balance the trade-off between the uplink and the downlink due to the communication resource limit.
	To the best of our knowledge, joint uplink and downlink transmission scheduling of distributed WNCSs has not been investigated in the open literature.

	\item We derive a sufficient stability condition of the WNCS in terms of both the control and communication system parameters.
	The result provides a theoretical guarantee that there exists at least one stationary and deterministic scheduling policy that can stabilize all plants of the WNCS.
	We show that the obtained condition is also necessary in the absence of spatial diversity of different communication links.
	
	\item We construct a finite-length countable vector state of the WNCS in terms of the time duration between consecutive received packets at the controller and the actuators.
	Then, we prove that the per-step cost function of the WNCS is determined by the time-duration-related vector state.
	Building on this, we formulate the optimal transmission scheduling problem into an MDP problem with a countable state space for achieving the minimum expected total discounted cost.
	We propose effective action space reduction and action embedding methods that can be applied to DRL algorithms, including foundational ones like DQN and more advanced ones such as DDPG and Twin Delayed Deep Deterministic Policy Gradient (TD3), for solving the problem.	
	Numerical results illustrate that the proposed algorithm can reduce the expected cost significantly compared to available benchmark policies.

\end{itemize}

Notations: $\sum_{m=i}^{j} a_m \triangleq 0$ if $i>j$.
$\limsup_{K \rightarrow\infty} $ is the  limit superior operator.
$\mathsf{C}^k_n \triangleq \frac{n!}{k!(n-k)!}$ and $\mathsf{P}^k_n\triangleq \frac{n!}{(n-k)!}$  are the numbers of combinations and permutations of $n$ things taken $k$ at a time, respectively. $\text{Tr}(\mathbf{A})$ and $\mathsf{rank}[\mathbf{A}]$ denote the trace and the rank of matrix $\mathbf{A}$, respectively.

\section{Distributed WNCS with Shared Wireless Resource}
We consider a distributed WNCS system with $N$ plants and a central controller as illustrated in Fig.~\ref{fig:sys}. Each smart sensor employs a local Kalman filter. The output of plant $i$ is measured by smart sensor $i$, which sends pre-filtered measurements (local state estimates) to the controller.
The central controller applies a remote (state) estimator and a control algorithm for plant $i$. It then generates and sends a control signal to the actuator $i$, thereby closing the loop.
The uplink (sensor-controller) and downlink (controller-actuator) communications for the $N$ plants share a common wireless network with only $M$ frequency channels, where $M<2N$. Thus, not every node is allowed to transmit at the same time and communications need to be scheduled. As shown in Fig.~\ref{fig:sys}, we shall focus on a setup where scheduling is done at the controller side, which schedules both the downlink and uplink transmissions.\footnote{Such a setup is practical, noting that most of the existing wireless communication systems, including 5G, support both uplink and downlink at base stations~\cite{3gpp.38.300}.}

\begin{figure}[t]
	\centering\includegraphics[scale=0.45]{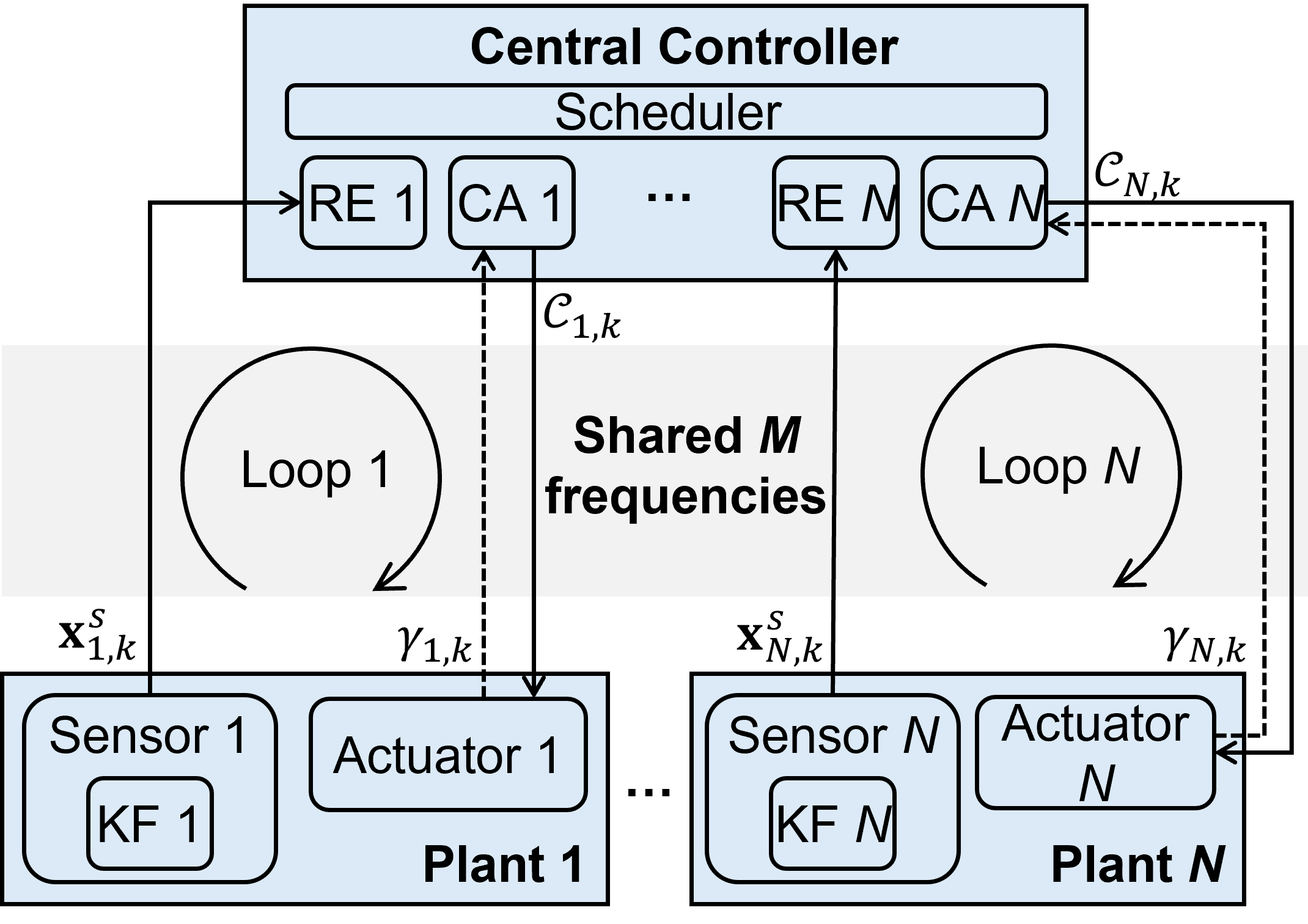}
	\vspace{-0.3cm}	
	\caption{A distributed networked control system with $N$ plants sharing $M$ frequency channels. Kalman filter, remote estimator and control algorithm are denoted as KF, RE and CA and discussed in Sections~\ref{sec:KF}, \ref{sec:RE} and \ref{sec:control}, respectively.}
	\label{fig:sys}
\end{figure}

\subsection{Plant Dynamics}
The $N$ plants are modeled as linear time-invariant (LTI) discrete-time systems as \cite{gatsis2015opportunistic,DeepCAS}
\begin{equation} \label{plant_state}
	\begin{aligned}
	&\mathbf{x}_{i,k+1} = \mathbf{A}_i\mathbf{x}_{i,k} + \mathbf{B}_i\mathbf{u}_{i,k} + \mathbf{w}_{i,k},\\
	&\mathbf{y}_{i,k} = \mathbf{C}_i\mathbf{x}_{i,k} + \mathbf{v}_{i,k}, \quad i = 1,\cdots, N
	\end{aligned}
\end{equation} 
where $\mathbf{x}_{i,k} \in \mathbb{R}^{n_i}$  and $\mathbf{u}_{i,k} \in \mathbb{R}^{m_i}$ are the state vector of plant $i$ and the control input applied by actuator $i$ at time $k$, respectively.
$\mathbf{w}_{i,k} \in \mathbb{R}^{n_i}$ is the $i$-th plant disturbance and is an independent and  identically distributed (i.i.d.) zero-mean Gaussian white noise process with   covariance matrix $\mathbf{Q}_i^w \in \mathbb{R}^{n_i \times n_i}$.  $\mathbf{A}_i \in \mathbb{R}^{n_i \times n_i}$ and $\mathbf{B}_i \in \mathbb{R}^{n_i \times m_i}$ are the system-transition matrix and control-input matrix for plant $i$, respectively. 
$\mathbf{y}_{i,k} \in \mathbb{R}^{p_i}$ is sensor $i$'s measurement  of plant $i$ at time $k$ and $\mathbf{v}_{i,k} \in \mathbb{R}^{p_i}$ is the measurement noise, modeled as an i.i.d.\ zero-mean Gaussian white noise process with covariance matrix $\mathbf{Q}_i^v \in \mathbb{R}^{p_i \times p_i}$. $\mathbf{C}_i \in \mathbb{R}^{p_i \times n_i}$ is the measurement matrix of plant $i$. 
We assume that plant $i$ is $v_i$-step controllable, $\forall i \in \{1,\dots,N\}$~\cite{Burak}, i.e., there exists a control gain $\tilde{\mathbf{K}}_i$ satisfying
\begin{equation}\label{control_property}
(\mathbf{A}_i+\mathbf{B}_i\tilde{\mathbf{K}}_i)^{v_i} = \mathbf{0}.
\end{equation}
Note that when $\mathbf{A}_i$ is
non-singular, the system~\eqref{plant_state} is $v_i$-step controllable if and only if
\begin{equation}
\mathsf{rank}\left[\mathbf{A}^{-1}_i \mathbf{B}_i, \mathbf{A}^{-2}_i \mathbf{B}_i, \dots,\mathbf{A}^{-v_i}_i \mathbf{B}_i\right] = n_i.
\end{equation}
The details of the control algorithm will be given in Section~\ref{sec:control}.

\subsection{Smart Sensors}\label{sec:KF}
Due to the measurement noise, each smart sensor runs a Kalman filter to estimate the current plant state as below~\cite{liu2020remote}:\footnote{In this subsection, we focus on smart sensor $i$ and the index $i$ of each quantity is omitted for clarity.}

\begin{equation}\label{eq:KF}
	\begin{aligned} 
	\mathbf{x}^s_{k|k-1} &= \mathbf{A}\mathbf{x}^s_{k-1} + \mathbf{B}\mathbf{u}_{k-1} \\
	\mathbf{P}^s_{k|k-1} &= \mathbf{A}\mathbf{P}^s_{k-1}\mathbf{A}^{\top} + \mathbf{Q}^w \\
	\mathbf{K}_k &= \mathbf{P}^s_{k|k-1}\mathbf{C}^{\top}(\mathbf{C}\mathbf{P}^s_{k|k-1}\mathbf{C}^{\top} + \mathbf{Q}^v)^{-1} \\
	\mathbf{x}^s_{k} &= \mathbf{x}^s_{k|k-1} + \mathbf{K}_k(\mathbf{y}_k - \mathbf{C}\mathbf{x}^s_{k|k-1}) \\
	\mathbf{P}^s_{k} &= (\mathbf{I}-\mathbf{K}_k\mathbf{C})\mathbf{P}^s_{k|k-1} 
	\end{aligned}
\end{equation}
where $\mathbf{x}^s_{k|k-1}$ and $\mathbf{x}^s_{k}$ are the prior and posterior state estimation at time $k$, respectively, and  $\mathbf{P}^s_{k|k-1}$ and $\mathbf{P}^s_{k}$ are the (estimation error) covariances of $\mathbf{e}^s_{k|k-1} \triangleq \mathbf{x}_{k} - \mathbf{x}^s_{k|k-1}$ and $\mathbf{e}^s_{k} \triangleq \mathbf{x}_k - \mathbf{x}^s_{k}$, respectively. 
$\mathbf{K}_k$ is the Kalman gain at time $k$.
We assume that  each $(\mathbf{A},\mathbf{C})$ is observable and $(\mathbf{A},\mathbf{Q}^w)$ is controllable~\cite{Wu2018Auto}. Thus, the Kalman gain $\mathbf{K}_k$ and error covariance matrix $\mathbf{P}^s_{k}$ converge to constant matrices $\hat{\mathbf{K}}$ and $\hat{\mathbf{P}}^s$, respectively, i.e., the smart sensor is in the stationary mode.
As shown in Fig.~\ref{fig:sys}, the smart sensor is co-located with the actuator in the plant~\cite{mishra2020stochastic} and can obtain the control input in~\eqref{eq:KF}.
Thus, the Kalman filter \eqref{eq:KF} is the optimal estimator of the linear system \eqref{plant_state} in terms of the estimation mean-square error~\cite{kailath1980linear}.

As shown in Fig.\ \ref{fig:sys}, at every scheduling instant $k$, the smart sensor sends the local estimate ${\mathbf{x}}_k^s$ (rather than the raw measurement $\mathbf{y}_k$) to the controller~\cite{liu2020remote}.

Before proceeding, we note that~\eqref{eq:KF} leads to the following recursion for the  estimation error at the sensors:
\begin{equation}
\mathbf{e}^s_{k+1|k} = \mathbf{A}\mathbf{e}^s_{k} + \mathbf{w}_k
\end{equation}
\begin{equation}
\mathbf{e}^s_{k} = \big(\mathbf{I}-\hat{\mathbf{K}}\mathbf{C}\big)\mathbf{e}^s_{k|k-1} - \hat{\mathbf{K}}\mathbf{v}_k
\end{equation}
and hence
\begin{equation}
\begin{aligned}
\mathbf{e}^s_k 
= \big(\mathbf{I}-\hat{\mathbf{K}}\mathbf{C}\big)\mathbf{A}\mathbf{e}^s_{k-1} + \big(\mathbf{I}-\hat{\mathbf{K}}\mathbf{C}\big)\mathbf{w}_{k-1} - \hat{\mathbf{K}}\mathbf{v}_k.
\end{aligned}
\end{equation}
Then, the relation between the estimation errors $\mathbf{e}_{k}^s$ and $\mathbf{e}_{k-K}^s$, $\forall K \in \mathbb{N}$, is established as
\begin{equation} \label{error_sensor}
\begin{aligned}
\mathbf{e}^s_k 
&\!=\! \mathbf{Z}^K\mathbf{e}^s_{k-\!K} \!+\!\! \sum_{i=1}^{K}\mathbf{Z}^{i\!-\!1}\big(\mathbf{I}\!-\!\hat{\mathbf{K}}\mathbf{C}\big)\mathbf{w}_{k-\!i} 
 \!-\!\! \sum_{i=1}^{K}\mathbf{Z}^{i-\!1}\hat{\mathbf{K}}\mathbf{v}_{k\!-i+1},
\end{aligned}
\end{equation}
where $\mathbf{Z} \triangleq \big(\mathbf{I}-\hat{\mathbf{K}}\mathbf{C}\big)\mathbf{A}$.

\subsection{Remote Estimation}\label{sec:RE}
At the beginning of a time slot, the controller calculates the control sequence and transmits it within the time slot based on the current plant state estimation rather than the real-time state, as the packet carrying the current plant state can only be received at the end of the current time slot due to the one-step transmission delay. Then, the controller applies a minimum mean-square error (MMSE) remote estimator for each plant taking into account the random packet dropouts and one-step transmission delay as~\cite{liu2020remote}:
\begin{equation}\label{eq:RE}
\hat{\mathbf{x}}_{i,k+1} = \begin{cases}
\mathbf{A}_i\mathbf{x}^s_{i,k} + \mathbf{B}_i\mathbf{u}_{i,k} &\mbox{ if } \beta_{i,k} = 1\\
\mathbf{A}_i\hat{\mathbf{x}}_{k} + \mathbf{B}_i\mathbf{u}_{i,k} &\mbox{ if } \beta_{i,k} = 0\\
\end{cases}
\end{equation}
where $\beta_{i,k} = 1$ or $0$ indicates that the controller receives sensor $i$'s packet or not at time $k$, respectively.
Then, the estimation error $\mathbf{e}_{i,k}$ is obtained as 
\begin{equation} \label{error_update}
\mathbf{e}_{i,k} \triangleq \mathbf{x}_{i,k} - \hat{\mathbf{x}}_{i,k} = \begin{cases}
\mathbf{A}_i\mathbf{e}^s_{i,k-1} + \mathbf{w}_{i,k-1} &\mbox{ if } \beta_{i,k-1} = 1\\
\mathbf{A}_i\mathbf{e}_{i,k-1} + \mathbf{w}_{i,k-1} &\mbox{ if } \beta_{i,k-1}=0
\end{cases}
\end{equation}

From~\eqref{eq:RE}, the controller's current estimation depends on the most recently received sensor estimation. Let $\tau_{i,k} \in  \{1,2,\cdots\}$ denote the \emph{age-of-information (AoI)}~(see \cite{KangTWC} and reference therein) of sensor $i$'s packet observed at time $k$, i.e., the number of elapsed time slots since the latest successfully delivered sensor $i$'s packet before the current time $k$, which reflects how old
the most recently received sensor measurement is.
Then, it is easy to see that the updating rule of $\tau_{i,k}$ is given by
\begin{equation} \label{tau_update}
\tau_{i,k+1} = \begin{cases}
1 &\mbox{ if } \beta_{i,k} = 1\\
\tau_{i,k} + 1 &\mbox{ otherwise.}
\end{cases}
\end{equation}
Using \eqref{error_update} and the AoI, the relation between the local and remote estimation error   can be characterized by:
\begin{equation} \label{error_local_remote}
\mathbf{e}_{i,k} = \mathbf{A}_{i}^{\tau_{i,k}}\mathbf{e}^s_{i,k-\tau_{i,k}} + \sum_{j = 1}^{\tau_{i,k}}\mathbf{A}_i^{j-1}\mathbf{w}_{k-j}.
\end{equation}

\subsection{Control Algorithm}\label{sec:control}
Due to the fact that downlink transmissions are unreliable, actuator $i$ may not  receive the controller’s control-command-carrying packets, even when transmissions are scheduled.
We adopt a predictive control approach~\cite{Burak,liu2020anytime} to provide robustness against packet failures: the controller sends a length-$v_i$ sequence of control commands including both the current command and the predicted future commands to the actuator once scheduled; if the current packet is lost, the actuator applies the previously received predictive command as the control input for the current time slot.

The control sequence for plant $i$ is generated by a linear deadbeat control gain $\tilde{\mathbf{K}}_i$ as~\cite{Burak}
\begin{equation} \label{predictive_control}
\mathcal{C}_{i,k} = \big[\tilde{\mathbf{K}}_{i}\hat{\mathbf{x}}_{i,k}, \tilde{\mathbf{K}}_{i}\mathbf{\Phi}_i\hat{\mathbf{x}}_{i,k},\cdots,\tilde{\mathbf{K}}_i(\mathbf{\Phi}_i)^{v_i-1}\hat{\mathbf{x}}_{i,k}\big]
\end{equation} 
where $\mathbf{\Phi}_i\triangleq \mathbf{A}_i+\mathbf{B}_i\tilde{\mathbf{K}}_i$ and $\tilde{\mathbf{K}}_i$ satisfies
$
(\mathbf{A}_i+\mathbf{B}_i\tilde{\mathbf{K}}_i)^{v_i} = \mathbf{0},
$
and $v_i$ is the controllability index of the pair $(\mathbf{A}_i,\mathbf{B}_i)$.
Note that the first element in $\mathcal{C}_{i,k}$ is the current control command and the rest are the predicted ones.

\begin{rem}
It can be verified that if the current state estimation  $\hat{\mathbf{x}}_{i,k}$ is perfect and the plant $i$ is disturbance free, then the plant state $\mathbf{x}_{i,k}$ would be set to zero after applying all $v_i$ steps of the control sequence in \eqref{predictive_control}.
Such a deadbeat controller is commonly considered as a time-optimal controller that takes the minimum time for setting the current plant state to the origin~\cite{deadbeat}.
We note that the deadbeat control law may not be cost-optimal to minimize the quadratic cost function defined in Section~\ref{sec:mainMDP}, and the optimal control law may depend on the scheduling policy. Since the current work focuses on the transmission scheduling design of the $N$-plant-$M$-frequency WNCS, the optimal joint control-scheduling problem can be investigated in our future work. Specifically, we aim to identify an optimal scheduling policy that minimizes the quadratic cost for a given deadbeat controller.
\end{rem}

Accordingly, actuator $i$ maintains a length-$v_i$ buffer
\begin{equation}
\mathcal{U}_{i,k} \triangleq [\mathbf{u}_{i,k}^0, \mathbf{u}_{i,k}^1,\cdots,\mathbf{u}_{i,k}^{v_i-1}]
\end{equation}
to store the received control commands.
If the current control packet is received, the buffer is reset with received sequence; otherwise, it is shifted one step forward as
\begin{equation}\label{eq:U_seq}
\mathcal{U}_{i,k} = \begin{cases}
\mathcal{C}_{i,k}, &\mbox{ if } \gamma_{i,k} = 1\\
[\mathbf{u}_{i,k-1}^1, \mathbf{u}_{i,k-1}^2,\cdots,\mathbf{u}_{i,k-1}^{v_i},\mathbf{0}], &\mbox{ if } \gamma_{i,k} = 0\\
\end{cases}
\end{equation}
where $\gamma_{i,k} = 1$ or $0$ indicate that actuator $i$ receives a control packet or not at time $k$, respectively.
The first command in the buffer is applied as the control input each time
\begin{equation}
\mathbf{u}_{i,k} \triangleq \mathbf{u}_{i,k}^0.
\end{equation}

Let $\eta_{i,k} \in \{1,2,\cdots\}$ denote the AoI of the controller's packet at actuator $i$ observed at time $k$, i.e., the number of elapsed time slots (including the current time slot) since the actuator $i$ last received a control packet. Its  updating rule is given as
\begin{equation} \label{eta_update}
\eta_{i,k} = \begin{cases}
1, &\mbox{ if } \gamma_{i,k} = 1\\
\eta_{i,k-1} + 1 &\mbox{ if } \gamma_{i,k} = 0
\end{cases}
\end{equation}
From~\eqref{predictive_control} and \eqref{eq:U_seq}, and by using the deadbeat control property~\eqref{control_property}, the applied control input can be concisely written as
\begin{equation} \label{apply_u}
\mathbf{u}_{i,k} = \tilde{\mathbf{K}}_i(\mathbf{\Phi}_i)^{\eta_{i,k}-1}\hat{\mathbf{x}}_{i,k+1-\eta_{i,k}}.
\end{equation}

\subsection{Communication Scheduler}\label{sec:schedule}
The $N$-plant WNCS has $N$ uplinks and $N$ downlinks sharing $M$ frequencies. Each frequency can be occupied by at most one link, and each link can be allocated to at most one frequency at a time.
Let $a_{m,k} \in \{-N,\dots, 0, \dots, N\}$ denote the allocated link to frequency $m$ at time $k$, where $a_{m,k} = i'$ means the frequency is allocated to $|i'|$-th plant, where $i'>0$ and $<0$ indicate for uplink and downlink, respectively, and $i'=0$ denotes that the frequency channel is idle.

The packet transmissions of each link are modeled as i.i.d.\ packet dropout processes.
Unlike most of the existing works wherein transmission scheduling of WNCSs assumes that transmissions from different nodes on the same frequency channel have the same packet drop probability \cite{LEONG2020108759}, we here consider a more practical scenario by taking into account the spatial diversity of different transmission nodes -- each frequency has different dropout probabilities for different uplink and downlink transmissions.
The packet success probabilities of the uplink and downlink of plant $i$ at frequency $m$ are given by $\xi^s_{m,i}$ and $\xi^c_{m,i}$, respectively,
where
\begin{equation}\label{eq:prob}
\begin{aligned}
&\xi^s_{m,i} \triangleq \mathbb{P}[\beta_{i,k} = 1|a_{m,k} = i],\\
&\xi^c_{m,i} \triangleq \mathbb{P}[\gamma_{i,k} = 1|a_{m,k} = -i].
\end{aligned}
\end{equation}

The packet success probabilities can be estimated  by the controller  based on standard channel estimation techniques~\cite{tse2005fundamentals,CSI}, and are utilized by the MDP and DRL-based solutions in Section~\ref{sec:MDP} and Section~\ref{sec:RL}. 

\emph{Acknowledgment feedback.}
We note that the Transmission Control Protocol (TCP) is commonly adopted in commercial telecommunication systems for data transmission, such as 5G and WiFi~\cite{Schenato07ProcIEEE,7164323,8039300}. TCP relies on acknowledgment feedback, where the receiver sends a one-bit acknowledgment signal back to the sender to confirm the successful receipt of data packets. This feedback mechanism ensures reliable data transmission by verifying the delivery of data. Therefore, we assume that both the uplink and downlink transmissions adopt the acknowledgment feedback scheme. In particular, the actuator sends a one-bit feedback signal of $\gamma_{i,k}$ to the controller, and the controller sends  $\beta_{i,k}$ as well as the received $\gamma_{i,k}$ to sensor $i$ each time with negligible overhead.
From $\{\beta_{i,k}\}$ and \eqref{eq:RE}, the smart sensor knows the estimated plant state $\hat{\mathbf{x}}_{i,k}$ by the controller; then, using $\hat{\mathbf{x}}_{i,k}$, $\{\gamma_{i,k}\}$ and \eqref{apply_u}, the sensor can calculate the applied control input $\mathbf{u}_{i,k}$, which is utilized for local state estimation as mentioned in \eqref{eq:KF}.
Note that sending one-bit acknowledgment feedback leads to negligible overhead in contrast to sending the applied control input $\mathbf{u}_{i,k}$ from the controller to the sensors.

\section{Stability Condition}\label{sec:stability}
Before turning to designing scheduling policies, it is critical to elucidate conditions that the WNCS needs to satisfy to ensure that there exists at least one stationary and deterministic scheduling policy that can stabilize all plants using the available network resources. 
Note that a policy $\pi$ is a function mapping from a state to an action. A deterministic policy means that the function gives the same action when the input state is fixed.
A \textbf{stationary policy} means the function $\pi$ is time invariant, i.e., $\pi_k=\pi,\forall k$. In other words, the action only depends on the state, not the time~\cite{bertsekas2000dynamic}.
We adopt a very commonly considered stochastic stability condition as below.
\setcounter{thm}{0}
\begin{assum} \label{ass:initial}
	The expected initial quadratic norm of each plant state is bounded, i.e.,
	$\mathbb{E}[\mathbf{x}_{i,0}^\top\mathbf{x}_{i,0}]<\infty, \forall i = 1,\dots,N$.	
\end{assum}	

\setcounter{thm}{0}
\begin{defn}[Mean-Square Stability] \label{def:stability}
	The WNCS is mean-square stable under Assumption~\ref{ass:initial}  if and only if 
	\begin{equation}\label{eq:stability_def}
	\limsup_{K \rightarrow\infty} \frac{1}{K} \sum_{k=1}^{K}
	\mathbb{E}[\mathbf{x}_{i,k}^\top\mathbf{x}_{i,k}] < \infty, \forall i=1,\dots,N.
	\end{equation}	
	
\end{defn}

Definition~\ref{def:stability} establishes mean‐square stability as the condition that the long‐term average of the expected squared state remains bounded. Our stability analysis is based on a stochastic framework that emphasizes average performance over time rather than focusing on instantaneous dropout events. In practice, there may be brief periods when $\gamma_{i,k}=0$, that is, when an actuator consecutively fails to receive new control commands. Although such short bursts can temporarily degrade performance, the overall system remains stable as long as these events occur infrequently enough. In other words, even if consecutive packet dropouts occur occasionally, the average packet success rate will ensure that the state does not grow unbounded. Thus, the stability condition guarantees that, over the long run, the system remains mean‐square stable under a properly designed scheduling protocol, rather than requiring stability at every single time instant. In the later analysis, we provide a specific scheduling policy with stability guarantees.

Intuitively, the stability condition of the WNCS should depend on both the (open-loop) unstable plant systems (i.e., those where $\rho(\mathbf{A}_i)\geq 1$) and the $M$-frequency communication system parameters. 
For the tractability of sufficient stability condition analysis (i.e., to prove the existence of a stabilizing policy), we focus on a policy class that groups the unstable plants into $M$ disjoint sets $\mathcal{F}_1,\dots,\mathcal{F}_M$ and allocates them to the $M$ frequencies, accordingly.
Let $\bar{\mathcal{F}}\triangleq \{i:\rho(\mathbf{A}_i)\geq 1,i\in\{1,\dots,N\}\}$ denote the index set of all unstable plants. We have
$\mathcal{F}_1\cup\dots\cup\mathcal{F}_M = \bar{\mathcal{F}} \subseteq \{1,\dots,N\}$  and
$\mathcal{F}_i\cap\mathcal{F}_j =\emptyset, \forall i\neq j$.
Then, we present the stability condition below which takes into account all potential allocations $(\mathcal{F}_1,\dots,\mathcal{F}_M)$.
\setcounter{thm}{1}

\setcounter{thm}{0}
\begin{thm}[Stabilizability]\label{theory:stability}
Consider the index set $\{\mathcal{F}_m\}$ as introduced above and define
$\rho^{\max}_{m} \triangleq \max_{i\in \mathcal{F}_m}\rho^2(\mathbf{A}_i)$, $\bar{\xi}^{\max}_{m} \triangleq \max_{i\in \mathcal{F}_m}\{\bar{\xi}^s_{m,i},\bar{\xi}^c_{m,i}\}$, $\bar{\xi}^s_{m,i} \triangleq 1 -{\xi}^s_{m,i}$, $\bar{\xi}^c_{m,i} \triangleq 1 -{\xi}^c_{m,i}$. We then have:
\par	(a) A sufficient condition under which the WNCS described by \eqref{plant_state}, \eqref{eq:KF}, \eqref{eq:RE}, \eqref{apply_u} and \eqref{eq:prob} has a stationary and deterministic scheduling policy satisfying the stability condition~\eqref{eq:stability_def} is given by
	\begin{equation}\label{eq:stability}
	\kappa \triangleq \min_{(\mathcal{F}_1,\dots,\mathcal{F}_M)}
	\max_{m=1,\dots,M,\mathcal{F}_m\neq \emptyset} \rho^{\max}_{m}\bar{\xi}^{\max}_{m}<1,
	\end{equation}
    where the operation $\min$ is taken over the set of all possible partitions $(\mathcal{F}_1,\dots,\mathcal{F}_M)$ of the set of unstable plants. For each such partition, we compute the maximum value over the $M$ frequencies, and then the minimum over all these partitions gives the most favorable (i.e., the smallest) worst-case product across the frequencies. The terms $\rho^{\max}_{m}$ and $\bar{\xi}^{\max}_{m}$ are computed with respect to the subset $\mathcal{F}_m$ of unstable plants.
    \par 
	(b) For the special case that the packet error probabilities of different links are identical at the same frequency (i.e., where no  spatial diversity exists), $\bar{\xi}^s_{m,1}=\bar{\xi}^c_{m,1} = \dots = \bar{\xi}^s_{m,N}=\bar{\xi}^c_{m,N}, \forall m=1,\dots,M$,
	the condition \eqref{eq:stability} is necessary and sufficient.

\end{thm}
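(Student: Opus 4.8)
The plan is to analyze the per-plant dynamics under a suitable scheduling policy and reduce the mean-square stability question to a scalar stochastic-recursion condition, then handle sufficiency and necessity separately. First I would observe that by Assumption~\ref{ass:initial} and the fact that the Kalman filter is in stationary mode, the local estimation error $\mathbf{e}^s_{i,k}$ has bounded second moment uniformly in $k$ (cf.~\eqref{error_sensor}, since $\rho(\mathbf{Z}_i)<1$). Combining the plant state with the applied control \eqref{apply_u} and the remote-estimation error recursion \eqref{error_update}--\eqref{error_local_remote}, one can decompose $\mathbf{x}_{i,k}$ into (i) deadbeat-controlled terms that vanish after $v_i$ successful downlink steps, (ii) noise terms with uniformly bounded moments, and (iii) terms driven by the products $\mathbf{A}_i^{\tau_{i,k}}$ and $\mathbf{\Phi}_i$-powers, where $\mathbf{\Phi}_i^{v_i}=\mathbf{0}$. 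The upshot is that $\mathbb{E}[\mathbf{x}_{i,k}^\top\mathbf{x}_{i,k}]$ is controlled by $\mathbb{E}[\rho^{2\tau_{i,k}}(\mathbf{A}_i)\,\|\mathbf{e}^s_{i,k-\tau_{i,k}}\|^2]$ plus bounded terms, so the key quantity is the stability of the AoI-weighted moment $\mathbb{E}[\rho^{2\tau_{i,k}}_i]$ (the downlink AoI $\eta_{i,k}$ is always bounded once a plant is scheduled within finite period, contributing only bounded $\mathbf{\Phi}_i$-powers).

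For part (a), I would then exhibit an explicit stationary deterministic policy for the minimizing partition $(\mathcal{F}_1,\dots,\mathcal{F}_M)$: on each frequency $m$, round-robin among the plants in $\mathcal{F}_m$, alternating uplink and downlink for each. Since $|\mathcal{F}_m|$ is finite, each link of each plant in $\mathcal{F}_m$ is served once every fixed period $T_m$, and the inter-delivery times of both sensor and control packets are geometric-type random variables with failure probability at most $\bar\xi^{\max}_m$ per attempt. A standard computation (as in stability analyses of remote estimation over i.i.d.\ erasure channels) gives that $\mathbb{E}[\rho^{2\tau_{i,k}}_i]$ stays bounded in the long-run Cesàro-average sense iff $\rho^2(\mathbf{A}_i)\cdot(\text{per-attempt failure prob})<1$ after accounting for the service period; taking the worst plant on frequency $m$ and the best partition yields exactly $\kappa<1$ as in \eqref{eq:stability}. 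Plugging this back into the decomposition and summing the finitely many bounded contributions establishes \eqref{eq:stability_def}. For the stable plants ($i\notin\bar{\mathcal{F}}$) nothing is needed since their uncontrolled dynamics are already mean-square bounded; they can be served on leftover resources or ignored.

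For part (b), under the no-spatial-diversity assumption the per-attempt failure probability $\bar\xi_m$ depends only on the frequency, so the performance of plant $i$ depends on the schedule only through how many frequency-slots it receives, not which plant shares its frequency. I would argue necessity by contradiction: if $\kappa\geq 1$, then for \emph{every} partition $(\mathcal{F}_1,\dots,\mathcal{F}_M)$ some frequency $m$ hosts an unstable plant $i$ with $\rho^2(\mathbf{A}_i)\bar\xi_m\geq 1$; more carefully, a counting/pigeonhole argument on the total number of channel uses shows that under \emph{any} stationary scheduling policy at least one unstable plant is served on some frequency with long-run rate low enough that $\mathbb{E}[\rho^{2\tau_{i,k}}_i]\to\infty$, since the AoI moment recursion $\mathbb{E}[\rho^{2\tau}]$ diverges whenever $\rho^2(\mathbf{A}_i)\bar\xi_m\geq 1$ and that plant cannot be scheduled every slot on a better frequency. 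This forces $\limsup_K \frac1K\sum_k \mathbb{E}[\mathbf{x}_{i,k}^\top\mathbf{x}_{i,k}]=\infty$, contradicting \eqref{eq:stability_def}; hence $\kappa<1$ is necessary, and combined with (a) it is necessary and sufficient.

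The main obstacle I anticipate is the necessity direction in part (b): turning "every partition has a bad frequency" into "every stationary policy (not just partition-based ones) fails" requires a careful argument that no time-sharing or state-dependent randomization across frequencies can rescue an unstable plant whose best available frequency still has $\rho^2(\mathbf{A}_i)\bar\xi_m\geq 1$ — essentially showing the AoI-moment Lyapunov drift is uncontrollable regardless of how cleverly the $M$ channels are juggled, which needs a uniform lower bound on the divergence rate of $\mathbb{E}[\rho^{2\tau_{i,k}}_i]$ over the whole policy class.
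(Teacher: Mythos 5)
There is a genuine gap in part (a): the scheduling policy you construct does not achieve the claimed condition. Under a fixed-period round-robin in which each link of each plant in $\mathcal{F}_m$ is served once every $T_m$ slots, the plant grows open-loop by a factor $\rho^2(\mathbf{A}_i)$ \emph{per slot} but only gets one Bernoulli success opportunity \emph{per period}, so the relevant geometric series is $\sum_n \rho^{2T_m n}(\mathbf{A}_i)\,\bar{\xi}^{\,n}$ and the resulting boundedness condition is $\rho^{2T_m}(\mathbf{A}_i)\bar{\xi}<1$ --- strictly stronger than $\rho^2(\mathbf{A}_i)\bar{\xi}<1$ whenever $T_m>1$. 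The phrase ``after accounting for the service period'' is exactly where this breaks. The paper avoids it by using a \emph{persistent} policy: on each frequency the controller retries the current link every slot until success before moving to the next link. Then each control cycle has length $L=L^o+L^s+L^c$ with every summand geometric, every slot of growth of plant $i$ is paired with one independent failure event of probability at most $\bar{\xi}^{\max}_m$, and $\mathbb{E}[\rho_i^{2L}]$ factors into terms each finite iff $\rho^2(\mathbf{A}_i)\bar{\xi}^{s/c}_{m,j}<1$ for the plant $j$ being served at that stage; taking the worst pair on each frequency and the best partition yields precisely $\kappa<1$. A secondary issue in your reduction: the downlink AoI $\eta_{i,k}$ is not bounded, and once it exceeds $v_i$ the actuator's buffer is empty and the plant runs open-loop, so the divergence mechanism involves both $\tau$ and $\eta$; this is why both $\bar{\xi}^s$ and $\bar{\xi}^c$ appear in the condition, and a decomposition keyed only to $\mathbb{E}[\rho^{2\tau_{i,k}}]$ does not capture it.

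For part (b) you correctly identify the obstacle (ruling out all stationary policies, not just partition-based ones) but leave it unresolved; the paper takes a much shorter route that you should adopt. It constructs a \emph{virtual} system in which a single chosen link is scheduled on the best of the $M$ frequencies in every slot while all other links transmit perfectly with zero resource cost. This genie-aided system outperforms any real policy for that link, so the classical single-loop necessary condition $\rho^2(\mathbf{A}_i)\min_m\bar{\xi}^{s}_{m,i}<1$ (and likewise for the downlink) is necessary for the real system; under the no-spatial-diversity assumption this coincides with $\kappa<1$, since then $\kappa=\max_{i\in\bar{\mathcal{F}}}\rho^2(\mathbf{A}_i)\cdot\min_m\bar{\xi}_m$. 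No pigeonhole or drift argument over the whole policy class is needed.
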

\begin{proof}
The sufficient condition \eqref{eq:stability} is derived in two steps: 1) the construction of a stationary and deterministic scheduling policy and 2) the proof of the condition under which the constructed policy leads to a bounded average cost.
	Since a plant with $\rho(\mathbf{A}_i)<1$ does not need any communication resources for stabilization, in the following, we only focus on the unstable plants in $\bar{\mathcal{F}}\neq \emptyset$.
	
	We construct a multi-frequency persistent scheduling policy: the unstable plants are grouped into $M$ sets $\mathcal{F}_1,\dots,\mathcal{F}_M$, corresponding to the $M$ frequencies.
	In each frequency, the controller schedules the uplink of the first plant, say plant $i$, persistently until success. It then persistently schedules the downlink until success, and waits for $v_i-1$ steps for applying all the control commands in the actuator's buffer. It then schedules the uplink of the second plant, and so forth. This procedure is repeated ad-infinitum.
	We choose such a policy because of its tractability and the tightness of the sufficient stability condition that we will derive.
	The detailed proof is included in~\cite{liu2021DRL}.
\end{proof}

Theorem~\ref{theory:stability} establishes that, once condition \eqref{eq:stability} is satisfied, there exists at least one stationary and deterministic scheduling protocol that guarantees mean-square stability of all plants. This does not imply that any arbitrary protocol will stabilize the system; rather, condition \eqref{eq:stability} ensures the existence of a suitable policy—such as the multi-frequency persistent scheduling approach we construct in the proof.


\setcounter{thm}{0}
\begin{exmp}
	Consider a WNCS with $N=3$ and $M=2$, and $\rho^2(\mathbf{A}_i)\geq 1, i=1,2,3$. The packet error probabilities are $\bar{\xi}^s_{1,1}=0.1$, $\bar{\xi}^c_{1,1}=0.3$, $\bar{\xi}^s_{1,2}=0.2$, $\bar{\xi}^c_{1,2}=0.1$, $\bar{\xi}^s_{1,3}=0.2$, $\bar{\xi}^c_{1,3}=0.4$, 
	$\bar{\xi}^s_{2,1}=0.1$, $\bar{\xi}^c_{2,1}=0.3$, $\bar{\xi}^s_{2,2}=0.2$, $\bar{\xi}^c_{2,2}=0.1$, $\bar{\xi}^s_{2,3}=0.2$, $\bar{\xi}^c_{2,3}=0.4$.
	There are eight plant-grouping schemes at the two frequencies $(\mathcal{F}_1,\mathcal{F}_2)$, i.e., $(\{1,2,3\},\emptyset)$, $(\{1,2\},\{3\})$, $(\{1\},\{2,3\})$, $(\{2,3\},\{1\})$, $(\{2\},\{1,3\})$, $(\{3\},\{1,2\})$, $(\{1,3\},\{2\})$ and $(\emptyset, \{1,2,3\})$.
	If $\rho^2(\mathbf{A}_1) =3$, $\rho^2(\mathbf{A}_2) =2$ and $\rho^2(\mathbf{A}_3) =1$, the stability condition is satisfied as $\kappa= 0.9<1$; if $\rho^2(\mathbf{A}_1) =1$, $\rho^2(\mathbf{A}_2) =2$ and $\rho^2(\mathbf{A}_3) =3$, the condition is unsatisfied as $\kappa= 1.2>1$.	
\end{exmp}

\setcounter{thm}{1}
\begin{rem}
	Theorem~\ref{theory:stability} captures the stabilizability of the WNCS scheduling problem in terms of both the dynamic system parameters, $\mathbf{A}_i, \forall i\in \bar{\mathcal{F}}$, and the wireless channel conditions, i.e., $\{\bar{\xi}^s_{m,i},\bar{\xi}^c_{m,i}\}, i\in \bar{\mathcal{F}}, m=1,\dots,M$.
	Once \eqref{eq:stability} holds, there is at least one stationary and deterministic policy that stabilizes all plants of the WNCS.
	If the channel quality of different links does not differ much at the same frequency, then the sufficient stabilizability condition is tight.
	To the best of our knowledge, this is the first stabilizability condition established for $N$-plant-$M$-frequency WNCS with uplink and downlink scheduling in the literature.
\end{rem}

\section{Analysis and MDP Design} \label{sec:mainMDP}
As a performance measure  of the WNCS, 
we consider the expected (infinite horizon) total discounted cost (ETDC) given by
\begin{equation} \label{quadratic cost}
J = \sum_{k = 0}^{\infty} \vartheta^k
\sum_{i=1}^{N}
\mathbb{E}[\mathbf{x}^{\top}_{i,k}\mathbf{S}^x_{i}\mathbf{x}_{i,k} + \mathbf{u}^{\top}_{i,k}\mathbf{S}^u_{i}\mathbf{u}_{i,k}],
\end{equation}
where $\vartheta \in (0,1)$ is the discount factor, and a smaller $\vartheta$ means the future cost is less important.
$\mathbf{S}^x_{i}$ and $\mathbf{S}^u_{i}$ are positive definite  weighting matrices for the system state and control input of plant $i$, respectively.
Thus, it is important to find a scheduling policy that can minimize the design objective~\eqref{quadratic cost}.

Note that an ETDC minimization problem is commonly obtained by reformulating it into an MDP and solving it by classical policy and value iteration methods~\cite{bertsekas2000dynamic}. Theoretically speaking, the MDP solution provides an optimal deterministic and stationary policy, which is a mapping between its state and the scheduling action at each time step.
However, the optimal MDP
solution is intractable due to the uncertainties involved and the curse of dimensionality.
Thus, we seek to find a good approximate MDP solution by DRL. 
In the following, we aim to formulate the scheduler design problem into an MDP, and then present a DRL solution in Section~\ref{sec:RL}.

\par From \eqref{quadratic cost}, the per-step cost of the WNCS depends on state $\mathbf{x}_{i,k}$ and $\mathbf{u}_{i,k}$, which have continuous (uncountable) state spaces. Furthermore, $\mathbf{x}_{i,k}$ is not observable by the controller. To design a suitable MDP problem with a countable state space, we first need to determine an observable, discrete state of the MDP (in Section~\ref{sec:MDP}) and investigate how to represent the per-step cost function in \eqref{quadratic cost}, i.e., $\sum_{i=1}^{N}
\mathbb{E}[\mathbf{x}^{\top}_{i,k}\mathbf{S}^x_{i}\mathbf{x}_{i,k} + \mathbf{u}^{\top}_{i,k}\mathbf{S}^u_{i}\mathbf{u}_{i,k}]$, in terms of the state.

\subsection{MDP State Definition}
\begin{figure}[t]
	\centering\includegraphics[scale=0.8]{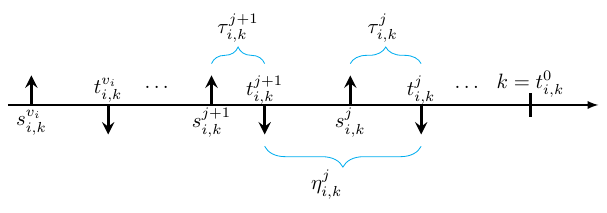}
	\vspace{-0.5cm}	
	\caption{Illustration of the state parameters of plant $i$.}
	\label{fig:AoI}
\end{figure}
We introduce event-related time parameters in the following.
Let $t_{i,k}^j, j = 1,2,\dots,v_i+1$, denote the time index of the $j$-th latest successful packet reception at actuator $i$ prior to the current time slot $k$, and $t_{i,k}^0 \triangleq k$. 
Let $s_{i,k}^j, j = 0,\dots,v_i$, denote the time index of the latest successful sensor $i$'s transmission prior to $t_{i,k}^j$, where $s_{i,k}^j\leq t_{i,k}^j$, as illustrated in Fig \ref{fig:AoI}.

Then, we define a sequence of variables, $\tau_{i,k}^j,j=0,\dots,v_i$, as
\begin{equation}\label{eq:all_tau}
\tau^j_{i,k} \triangleq t^j_{i,k}-s^j_{i,k},
\end{equation}
to record the estimation quality at $k$ and at the $v_i$ successful control transmissions, where $\tau_{i,k}^0 \triangleq \tau_{i,k}$ was defined above \eqref{tau_update}.

Similarly, we define  
\begin{equation}\label{eq:all_eta}
\eta_{i,k}^j \triangleq t_{i,k}^j - t_{i,k}^{j+1},\quad j=0,\dots,v_i,
\end{equation}
denoting the time duration between consecutive successful controller's transmissions, where $\eta_{i,k}^0 \triangleq \eta_{i,k}$ was defined above \eqref{eta_update}.
From \eqref{eq:all_tau} and \eqref{eq:all_eta}, $\tau^j_{i,k}$ and $\eta^j_{i,k}$ can be treated as the AoI of the sensor's and the controller's packet of plant $i$, respectively, observed at $t_{i,k}^j$.

From \eqref{eq:all_tau}, \eqref{tau_update}, \eqref{eq:all_eta}, and \eqref{eta_update}, the updating rules of $\tau_{i,k}^j$ and $\eta_{i,k}^j$ can be obtained as
\begin{equation} \label{tau_rules}
\tau_{i,k+1}^j = \begin{cases}
\left.
\begin{aligned}
&1, &\mbox{ if } \beta_{i,k} = 1\\
&\tau_{i,k}^0 + 1, &\mbox{ if } \beta_{i,k} = 0\\
\end{aligned}
\right\}\mbox{ for } j=0\\
\left.
\begin{aligned}
&\tau_{i,k}^{j-1}, &\mbox{ if } \gamma_{i,k+1} = 1\\
&\tau_{i,k}^j, &\mbox{ if } \gamma_{i,k+1} = 0\\
\end{aligned}
\right\}\mbox{ for } j=1,\cdots,v_i\\
\end{cases}
\end{equation}
\begin{equation} \label{eta_rules}
\eta_{i,k}^j = \begin{cases}
\left.
\begin{aligned}
&1, &\mbox{ if } \gamma_{i,k} = 1\\
&\eta_{i,k-1}^0 + 1, &\mbox{ if } \gamma_{i,k} = 0\\
\end{aligned}
\right\}\mbox{ for } j=0\\
\left.
\begin{aligned}
&\eta_{i,k-1}^{j-1}, &\mbox{ if } \gamma_{i,k} = 1\\
&\eta_{i,k-1}^j, &\mbox{ if } \gamma_{i,k} = 0\\
\end{aligned}
\right\}\mbox{ for } j=1,\cdots,v_i\\
\end{cases}
\end{equation}

Now we define the AoI-related vector state of plant $i$ as 
\begin{equation}
\label{eq:state}
\mathbf{s}_{i,k} \triangleq  (\tau_{i,k}^0,\cdots,\tau_{i,k}^{v_i},\eta_{i,k}^0,\cdots,\eta_{i,k}^{v_i}).
\end{equation}

\subsection{MDP Cost Function}
We will show that the per-step cost function of plant $i$ in \eqref{quadratic cost}, i.e., $\mathbb{E}[\mathbf{x}^{\top}_{i,k}\mathbf{S}^x_{i}\mathbf{x}_{i,k} + \mathbf{u}^{\top}_{i,k}\mathbf{S}^u_{i}\mathbf{u}_{i,k}]$, is determined by the vector state $\mathbf{s}_{i,k}$.
We focus on plant $i$ (the analytical method is identical for the other plants), and shall omit the plant index subscript of $\mathbf{x}_{i,k}$, $\mathbf{s}_{i,k}$, $\mathbf{u}_{i,k}$, $\mathbf{S}^x_{i}$ and $\mathbf{S}^u_{i}$ in the remainder of this subsection, where the corresponding parameters are $\mathbf{x}_{k}$, $\mathbf{s}_{k}$, $\mathbf{u}_{k}$, $\mathbf{S}^x$ and $\mathbf{S}^u$.

Taking \eqref{apply_u} into \eqref{plant_state}, the plant state evolution can be rewritten as
\begin{equation}\label{eq:x_iter}
\mathbf{x}_{k+1} = \mathbf{A}\mathbf{x}_k + \mathbf{B}\tilde{\mathbf{K}}(\mathbf{A}+\mathbf{B}\tilde{\mathbf{K}})^{\eta^0_k-1}\hat{\mathbf{x}}_{k+1-\eta^0_k} + \mathbf{w}_k.
\end{equation}
By using this backward iteration for $k-t^v_k$ times and the definition of $\eta_{k}^j$ and $t^j_k$, we have
\begin{equation}
\begin{aligned} \label{x_expression1}
&\mathbf{x}_{k} \!\!=\!\! (\mathbf{A}\!+\!\!\mathbf{B}\tilde{\mathbf{K}})^{\eta^0_k}\mathbf{x}_{t_k^1} \!\!+\!\! \big(\mathbf{A}^{\eta^0_k} \!\!-\!\! (\mathbf{A}\!+\!\mathbf{B}\tilde{\mathbf{K}})^{\eta^0_k}\big)\mathbf{e}_{t_k^1} \!\!+\!\!\! \sum_{i=1}^{\eta_k^0}\!\!\mathbf{A}^{i\!-\!1}\mathbf{w}_{k\!-i}\\
&= (\mathbf{A}+\mathbf{B}\tilde{\mathbf{K}})^{\eta^0_k}\times \\
& \bigg(\!\!(\mathbf{A}\!+\!\!\mathbf{B}\tilde{\mathbf{K}})^{\eta^1_k}\mathbf{x}_{t_k^2} \!\!+\! \big(\mathbf{A}^{\eta^1_k} \!-\! (\mathbf{A}\!+\!\mathbf{B}\tilde{\mathbf{K}})^{\eta^1_k}\big)\mathbf{e}_{t_k^2} \!+\!\! \sum_{i=1}^{\eta_k^1}\!\mathbf{A}^{i\!-\!1}\mathbf{w}_{t_k^1\!-\!i}\!\!\bigg)\\
&+ \big(\mathbf{A}^{\eta^0_k} - (\mathbf{A}+\mathbf{B}\tilde{\mathbf{K}})^{\eta^0_k}\big)\mathbf{e}_{t_k^1} + \sum_{i=1}^{\eta_k^0}\mathbf{A}^{i-1}\mathbf{w}_{k-i}\\
&=\dots\\
&= (\mathbf{A}+\mathbf{B}\tilde{\mathbf{K}})^{\eta_k^0 + \eta_k^1 + \cdots + \eta_k^{v-1}}\mathbf{x}_{t_k^v} + \mathbf{w}' + \mathbf{e}'\\
&= \mathbf{w}' + \mathbf{e}',
\end{aligned}
\end{equation}
where the last equation is due to the deadbeat control property~\eqref{control_property}. The quantity $\mathbf{w}'$ is related to the plant disturbance and $\mathbf{e}'$ is determined by the controller's estimation errors at the successful control packet transmissions, i.e., $\mathbf{e}_{t^1_k},\dots,\mathbf{e}_{t^{v}_k}$, as given below:
\begin{align}
&\mathbf{w}' = \sum_{i=1}^{\eta_k^0}\mathbf{A}^{i-1}\mathbf{w}_{k-i} + (\mathbf{A}+\mathbf{B}\tilde{\mathbf{K}})^{\eta_k^0}\sum_{i=1}^{\eta_k^1}\mathbf{A}^{i-1}\mathbf{w}_{k-i} +\\
& \cdots + (\mathbf{A}+\mathbf{B}\tilde{\mathbf{K}})^{\eta_k^0 + \cdots + \eta_k^{v-2}}\sum_{i=1}^{\eta_k^{v-1}}\mathbf{A}^{i-1}\mathbf{w}_{t_k^{v-1}-i} \nonumber \\
&= \sum_{j=1}^{v}\bigg((\mathbf{A}+\mathbf{B}\tilde{\mathbf{K}})^{\sum_{m=0}^{j-2}\eta_k^m}\sum_{i=t_k^j}^{t_k^{j-1}-1}\mathbf{A}^{t_k^{j-1}-1-i}\mathbf{w}_i\bigg) \label{w_expression},\\
&\mathbf{e}' = \big(\mathbf{A}^{\eta_k^0}-(\mathbf{A}+\mathbf{B}\tilde{\mathbf{K}})^{\eta_k^0}\big)\mathbf{e}_{t_k^1} +\\
& (\mathbf{A}+\mathbf{B}\tilde{\mathbf{K}})^{\eta_k^0}\big(\mathbf{A}^{\eta_k^1}-(\mathbf{A}+\mathbf{B}\tilde{\mathbf{K}})^{\eta_k^1}\big)\mathbf{e}_{t_k^2} \nonumber +\\
& \cdots + (\mathbf{A}\!+\!\mathbf{B}\tilde{\mathbf{K}})^{\eta_k^0 + \cdots + \eta_k^{v-2}}\big(\mathbf{A}^{\eta_k^{v-1}}\!-\!(\mathbf{A}+\mathbf{B}\tilde{\mathbf{K}})^{\eta_k^{v-1}}\big)\mathbf{e}_{t_k^v}. \label{error_collection}
\end{align} 
By analyzing the correlation between the sequences of plant disturbance and estimation noise, we have the following result.

\setcounter{thm}{0}
\begin{prop} \label{pro_3}
	The per-step cost function about the plant state is a deterministic function of the vector state $\mathbf{s}_k$ (see \eqref{eq:state}) as	
	\begin{equation}\label{eq:Jx}
	\normalfont
	J^x(\mathbf{s}_k) \triangleq \mathbb{E}\big[\mathbf{x}_{k}^{\top}\mathbf{S}^{x}\mathbf{x}_{k}\big] = \text{Tr}(\mathbf{S}^{x}V(\mathbf{s}_{k})),
	\end{equation}		
	where $V(\mathbf{s}_k) \triangleq \mathbb{E}[\mathbf{x}_k \mathbf{x}^\top_k]$ is the plant state covariance given in \eqref{P_k}. In the latter equation,  for $ 0 \leq i \leq j \leq v$, we have
	\begin{equation}\label{eq:delta}
	\Delta_k(i,j) \triangleq s_k^i - s_k^j = \sum_{n=i}^{j-1}\eta_k^{n} + \tau_k^j - \tau_k^i. 	
	\end{equation}
\begin{table*}	
	\begin{equation} \label{P_k}
	\begin{aligned}
	V(\mathbf{s}_k)& =  \mathbf{D}_k\hat{\mathbf{P}}^s(\mathbf{D}_k)^{\top} + \sum_{n=1}^{v-1}\big(\sum_{i=0}^{\Delta_k(n,n+1)-1}\check{\mathbf{E}}_{(k,n)}^i\mathbf{Q}_w(\check{\mathbf{E}}_{(k,n)}^i)^{\top}\big)
	+ \sum_{i=0}^{\eta_k^0+\tau_k^1-1}\mathbf{A}^i\mathbf{Q}_w(\mathbf{A}^i)^{\top} +  \sum_{n=1}^{v-1}\big(\sum_{i=0}^{\Delta_k(n,n+1)-1}\check{\mathbf{F}}^i_{(k,n)}\mathbf{Q}_v(\check{\mathbf{F}}^i_{(k,n)})^{\top}\big),\\
	\mathbf{D}_k &\triangleq \sum_{j=1}^{v}\bigg((\mathbf{A}+\mathbf{B}\tilde{\mathbf{K}})^{\sum_{m=0}^{j-2}\eta_k^m}\big(\mathbf{A}^{\eta_k^{j-1}}-(\mathbf{A}+\mathbf{B}\tilde{\mathbf{K}})^{\eta_k^{j-1}}\big)\mathbf{A}^{\tau_k^j}\mathbf{Z}^{\Delta_k(j,v)}\bigg), \\
\check{\mathbf{E}}_{(k,n)}^i &\triangleq (\mathbf{A}+\mathbf{B}\tilde{\mathbf{K}})^{\sum_{m=0}^{n-1}\eta_k^m}\mathbf{A}^{\tau_k^n + i}  
+ \sum_{j=1}^{n}\bigg((\mathbf{A}+\mathbf{B}\tilde{\mathbf{K}})^{\sum_{m=0}^{j-2}\eta_k^m}\big(\mathbf{A}^{\eta_k^{j-1}}-(\mathbf{A}+\mathbf{B}\tilde{\mathbf{K}})^{\eta_k^{j-1}}\big)\mathbf{A}^{\tau_k^j}\mathbf{Z}^{i + \Delta_k(j,n)}(\mathbf{I}-\hat{\mathbf{K}}\mathbf{C})\bigg), \\
\check{\mathbf{F}}^i_{(k,n)} &\triangleq \sum_{j=1}^{n}\bigg((\mathbf{A}+\mathbf{B}\tilde{\mathbf{K}})^{\sum_{m=0}^{j-2}\eta_k^m}\big(\mathbf{A}^{\eta_k^{j-1}}-(\mathbf{A}+\mathbf{B}\tilde{\mathbf{K}})^{\eta_k^{j-1}}\big)\mathbf{A}^{\tau_k^j}\mathbf{Z}^{\Delta_k(j,n)+i}\hat{\mathbf{K}}\bigg)
	\end{aligned}
	\end{equation}

\end{table*}
\end{prop}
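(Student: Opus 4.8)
The plan is to unroll the representation $\mathbf{x}_k=\mathbf{w}'+\mathbf{e}'$ of \eqref{x_expression1}--\eqref{error_collection} into an explicit linear combination of mutually independent, zero-mean random vectors, then read off $\mathbb{E}[\mathbf{x}_k\mathbf{x}_k^\top]$ term by term and match it with \eqref{P_k}; the trace identity \eqref{eq:Jx} follows at once from $\mathbb{E}[\mathbf{x}_k^\top\mathbf{S}^x\mathbf{x}_k]=\text{Tr}(\mathbf{S}^x\,\mathbb{E}[\mathbf{x}_k\mathbf{x}_k^\top])$. First, each remote error $\mathbf{e}_{t_k^j}$ in \eqref{error_collection} (for $j=1,\dots,v$, evaluated at the time of the $j$-th most recent successful downlink) is rewritten through the local Kalman error by \eqref{error_local_remote}: $\mathbf{e}_{t_k^j}=\mathbf{A}^{\tau_k^j}\mathbf{e}^s_{s_k^j}+\sum_{l=1}^{\tau_k^j}\mathbf{A}^{l-1}\mathbf{w}_{t_k^j-l}$, where $s_k^j=t_k^j-\tau_k^j$. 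Since $k=t_k^0>t_k^1>\cdots$ and the $s_k^j$ are non-increasing in $j$ by construction (and $t_k^j=k-\sum_{m<j}\eta_k^m$, $s_k^j=t_k^j-\tau_k^j$, which also yields the asserted identity \eqref{eq:delta}), the oldest reference point is $s_k^v$; applying \eqref{error_sensor} with $K=\Delta_k(j,v)$ expresses every $\mathbf{e}^s_{s_k^j}$ through the single anchor $\mathbf{e}^s_{s_k^v}$ plus runs of $\mathbf{w}$'s over $[s_k^v,s_k^j-1]$ and of $\mathbf{v}$'s over $[s_k^v+1,s_k^j]$, all weights being functions of $\mathbf{s}_k$.

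Substituting these into \eqref{x_expression1} and regrouping by time index gives $\mathbf{x}_k=\mathbf{D}_k\mathbf{e}^s_{s_k^v}+\sum_\ell(\,\cdot\,)\mathbf{w}_\ell+\sum_\ell(\,\cdot\,)\mathbf{v}_\ell$. The coefficient of $\mathbf{e}^s_{s_k^v}$ equals the coefficient of $\mathbf{e}_{t_k^j}$ in \eqref{error_collection} times $\mathbf{A}^{\tau_k^j}$ (from the first substitution) times $\mathbf{Z}^{\Delta_k(j,v)}$ (from the second), summed over $j=1,\dots,v$ --- which is exactly $\mathbf{D}_k$ of \eqref{P_k}. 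A given $\mathbf{v}_\ell$ enters only through the $\mathbf{v}$-run of the second substitution, for routes $j$ with $s_k^j\ge\ell$; writing $\ell$ in the segment $(s_k^{n+1},s_k^n]$ and using $s_k^j-\ell=\Delta_k(j,n)+(s_k^n-\ell)$ shows that its aggregated coefficient is $-\check{\mathbf{F}}_{(k,n)}^{i}$ with local index $i=s_k^n-\ell$ (the sign being immaterial in the covariance). A given $\mathbf{w}_\ell$ enters along three routes --- directly via $\mathbf{w}'$ of \eqref{w_expression}, via the $\mathbf{w}$-run of the first substitution, and via that of the second --- and summing these, the $\mathbf{w}'$/first-substitution part telescopes (as in the derivation of \eqref{x_expression1}) into a single $(\mathbf{A}+\mathbf{B}\tilde{\mathbf{K}})^{(\cdot)}\mathbf{A}^{(\cdot)}$ term while the second-substitution part becomes a $\mathbf{Z}^{(\cdot)}(\mathbf{I}-\hat{\mathbf{K}}\mathbf{C})$ sum, giving the coefficient $\check{\mathbf{E}}_{(k,n)}^{i}$ with $i=s_k^n-1-\ell$ for $\ell\in[s_k^{n+1},s_k^n-1]$, $n=1,\dots,v-1$, and simply $\mathbf{A}^{i}$ with $i=k-1-\ell$ for the most recent block $\ell\in[s_k^1,k-1]$ (which has $\eta_k^0+\tau_k^1$ elements).

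It then remains to take the covariance. Conditioned on $\mathbf{s}_k$ the time offsets $t_k^j-k$ and $s_k^j-k$ are fixed, and since the packet-reception processes are independent of the disturbance and measurement-noise processes, the three blocks $\{\mathbf{w}_\ell\}_{\ell=s_k^v}^{k-1}$, $\{\mathbf{v}_\ell\}_{\ell=s_k^v+1}^{s_k^1}$ and $\mathbf{e}^s_{s_k^v}$ are pairwise independent and zero-mean: the $\mathbf{w}_\ell$ are i.i.d.\ with covariance $\mathbf{Q}_w$, the $\mathbf{v}_\ell$ i.i.d.\ with covariance $\mathbf{Q}_v$ and independent of the $\mathbf{w}_\ell$, while $\mathbf{e}^s_{s_k^v}$ --- a function of $\{\mathbf{w}_i\}_{i<s_k^v}$ and $\{\mathbf{v}_i\}_{i\le s_k^v}$ --- satisfies $\mathbb{E}[\mathbf{e}^s_{s_k^v}(\mathbf{e}^s_{s_k^v})^\top]=\hat{\mathbf{P}}^s$ because the sensor Kalman filter is in its stationary mode (Section~\ref{sec:KF}). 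Hence all cross terms in $\mathbb{E}[\mathbf{x}_k\mathbf{x}_k^\top]$ vanish and the surviving ``variance'' contributions are precisely the four sums of \eqref{P_k}. Since every matrix coefficient and every summation limit depends only on $\{\eta_k^j,\tau_k^j\}$ --- equivalently on $\mathbf{s}_k$, via \eqref{eq:delta} --- $V(\cdot)$ is a deterministic function of $\mathbf{s}_k$, and $J^x(\mathbf{s}_k)=\text{Tr}(\mathbf{S}^x V(\mathbf{s}_k))$ as claimed.

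The main obstacle is this coefficient-collection step: because each $\mathbf{w}_\ell$ is injected along three routes with weights built from powers of $\mathbf{A}$, $\mathbf{A}+\mathbf{B}\tilde{\mathbf{K}}$ and $\mathbf{Z}$, one has to track carefully how the event times $s_k^v\le\cdots\le s_k^1\le t_k^1\le\cdots\le t_k^0=k$ interleave and partition the time axis, fix the direction of the local index within each segment, and use the deadbeat identity $(\mathbf{A}+\mathbf{B}\tilde{\mathbf{K}})^v=\mathbf{0}$ to kill the outermost $\mathbf{x}_{t_k^v}$ term. Once the aggregated coefficients are identified with $\mathbf{D}_k$, $\check{\mathbf{E}}_{(k,n)}^i$ and $\check{\mathbf{F}}_{(k,n)}^i$, the independence argument and the final trace step are routine.
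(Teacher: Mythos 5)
Your proposal follows essentially the same route as the paper's Appendix~B: rewrite each $\mathbf{e}_{t_k^j}$ via \eqref{error_local_remote}, anchor all local errors at $\mathbf{e}^s_{s_k^v}$ via \eqref{error_sensor}, collect the coefficients of $\mathbf{e}^s_{s_k^v}$, $\{\mathbf{w}_\ell\}$ and $\{\mathbf{v}_\ell\}$ into $\mathbf{D}_k$, $\check{\mathbf{E}}_{(k,n)}^i$ and $\check{\mathbf{F}}_{(k,n)}^i$, and conclude by independence and stationarity of the sensor Kalman filter. The one place where the paper expends real effort that you only gesture at is the telescoping of the $\mathbf{w}$-coefficients, which the paper formalizes with an auxiliary function $G$ and a two-case analysis (depending on whether a successful uplink occurs between consecutive successful downlinks, i.e.\ whether $s_k^{j-1}\ge t_k^{j}$ or $s_k^{j}=s_k^{j-1}$), but you correctly flag this interleaving as the main obstacle, so the plan is sound.
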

\vspace{-0.7cm}
Then, building on the system dynamics~\eqref{plant_state}, the local estimate~\eqref{eq:KF}, the remote estimate~\eqref{eq:RE}, and the control input~\eqref{apply_u}, and by comprehensively analyzing the effect of the correlations between plant disturbance and estimation noise on the control input covariance,  we obtain the per-step cost function about the control input as below.
\begin{prop} \label{pro_4}
	The per-step cost function about the control input at $k$ is a deterministic function of the vector state $\mathbf{s}_{k}$ as
	\begin{equation} \label{eq:Ju}
	\normalfont
	\begin{aligned}
	&J^u(\mathbf{s}_{k}) \triangleq \mathbb{E}\big[\mathbf{u}_k^{\top}\mathbf{S}^u\mathbf{u}_k\big]\\
	&=\!\! \text{Tr}\bigg(\!\!\big(\tilde{\mathbf{K}}(\!\mathbf{A}\!+\!\mathbf{B}\tilde{\mathbf{K}}\!)^{\eta_k\!-\!1}\big)^{\!\!\top}\!\mathbf{S}^u\big(\tilde{\mathbf{K}}(\!\mathbf{A}\!+\!\mathbf{B}\tilde{\mathbf{K}}\!)^{\eta_k\!-\!1}\big)\hat{V}(
	\mathbf{s}_{k+1-\eta_k}
	)\!\!\bigg),
	\end{aligned}
	\end{equation}
	where $\mathbf{s}_{k+1-\eta_k}$ can be directly obtained by $\mathbf{s}_{k}$. $\hat{V}(\mathbf{s}_k) \triangleq  \mathbb{E}[\hat{\mathbf{x}}_k \hat{\mathbf{x}}_k^\top]$ is the covariance of the remote estimate given in~\eqref{eq:V_hat}, where
	\begin{align}\label{eq:prop2_DEF}
\tilde{\mathbf{D}}_k\ &\triangleq \mathbf{D}_k - \mathbf{A}^{\tau_k^0}\mathbf{Z}^{\Delta_k(0,v)},\\
\dot{\mathbf{E}}_{(k,n)}^i &\triangleq \check{\mathbf{E}}_{(k,n)}^i - \mathbf{A}^{\tau_k^0}\mathbf{Z}^{\Delta_k(0,n)-\tau_k^n+i}(\mathbf{I}-\hat{\mathbf{K}}\mathbf{C}),\\
\dot{\mathbf{F}}_{(k,n)}^i &\triangleq \check{\mathbf{F}}_{(k,n)}^i + \mathbf{A}^{\tau_k^0}\mathbf{Z}^{\Delta_k(0,n)+i}\hat{\mathbf{K}}.
\end{align}

\begin{table*}
		\begin{equation}\label{eq:V_hat}
	\begin{aligned}
	&\hat{V}(\mathbf{s}_k) = \tilde{\mathbf{D}}_k\hat{\mathbf{P}}^s\tilde{\mathbf{D}}_k^{\top} \!\!+\!\! \sum_{n=1}^{v-1}\big(\sum_{i=0}^{\Delta_k(n,n+1)-1}\dot{\mathbf{E}}_{(k,n)}^i\mathbf{Q}_w(\dot{\mathbf{E}}_{(k,n)}^i)^{\top}\big)
	+ \!\!\!\!\!\!\!\!\sum_{i=0}^{\Delta_k(0,1)-1}\!\!\big(\mathbf{A}^{\tau_k^0 + i} - \mathbf{A}^{\tau_k^0}\mathbf{Z}^{i}(\mathbf{I}-\hat{\mathbf{K}}\mathbf{C})\big)\mathbf{Q}_w\big(\mathbf{A}^{\tau_k^0 + i} - \mathbf{A}^{\tau_k^0}\mathbf{Z}^{i}(\mathbf{I}-\hat{\mathbf{K}}\mathbf{C})\big)^{\top}\\
	&\quad \qquad + \sum_{n=1}^{v-1}\big(\sum_{i=0}^{\Delta_k(n,n+1)-1}\dot{\mathbf{F}}_{(k,n)}^i\mathbf{Q}_w(\dot{\mathbf{F}}_{(k,n)}^i)^{\top}\big) + \sum_{i=0}^{\Delta_k(0,1)-1}\big(\mathbf{A}^{\tau_k^0}\mathbf{Z}^{i}\hat{\mathbf{K}}\big)\mathbf{Q}_w\big(\mathbf{A}^{\tau_k^0}\mathbf{Z}^{i}\hat{\mathbf{K}}\big)^{\top}
	\end{aligned}
	\end{equation}

\end{table*}
\end{prop}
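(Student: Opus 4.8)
The plan is to reduce the statement to the closed form \eqref{apply_u} of the applied input and then re-use the expansion \eqref{x_expression1}--\eqref{error_collection} developed for Proposition~\ref{pro_3}. Writing $\mathbf{M}\triangleq\tilde{\mathbf{K}}(\mathbf{A}+\mathbf{B}\tilde{\mathbf{K}})^{\eta_k-1}$, equation \eqref{apply_u} reads $\mathbf{u}_k=\mathbf{M}\hat{\mathbf{x}}_{k+1-\eta_k}$; hence $\mathbb{E}[\mathbf{u}_k^\top\mathbf{S}^u\mathbf{u}_k]=\text{Tr}\big(\mathbf{M}^\top\mathbf{S}^u\mathbf{M}\,\mathbb{E}[\hat{\mathbf{x}}_{k+1-\eta_k}\hat{\mathbf{x}}_{k+1-\eta_k}^\top]\big)$, which equals $\text{Tr}\big(\mathbf{M}^\top\mathbf{S}^u\mathbf{M}\,\hat V(\mathbf{s}_{k+1-\eta_k})\big)$ and is exactly \eqref{eq:Ju}, once two facts are in place: (i) the reduced state $\mathbf{s}_{k+1-\eta_k}$ is a deterministic function of $\mathbf{s}_k$; and (ii) the covariance $\hat V(\mathbf{s}_k)\triangleq\mathbb{E}[\hat{\mathbf{x}}_k\hat{\mathbf{x}}_k^\top]$ of the remote estimate, for a trajectory whose vector state equals $\mathbf{s}_k$, is the deterministic matrix \eqref{eq:V_hat}. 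The work therefore splits into (i) and (ii).

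For (i) the argument is short. By \eqref{apply_u}, $k+1-\eta_k$ is the time at which the control sequence currently stored in actuator $i$'s buffer was generated, i.e.\ the most recent successful control reception, and no control packet is received in the intervening slots; hence by \eqref{tau_rules}--\eqref{eta_rules} the index-$\ge 1$ entries of the vector state stay frozen over this window, while the index-$0$ entries evolve only through sensor receptions. Unrolling \eqref{tau_rules}--\eqref{eta_rules} from $k$ back to $k+1-\eta_k$ then exhibits every component of $\mathbf{s}_{k+1-\eta_k}$ as a relabeling of entries of $\mathbf{s}_k$ --- recall that $\tau_k^1,\dots,\tau_k^{v}$ and $\eta_k^1,\dots,\eta_k^{v}$ are, by \eqref{eq:all_tau}--\eqref{eq:all_eta}, precisely the AoI quantities recorded at the past control receptions. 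This is routine bookkeeping.

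Fact (ii) is the heart of the proof. The idea is to avoid re-deriving everything from scratch by writing $\hat{\mathbf{x}}_k=\mathbf{x}_k-\mathbf{e}_k$ and expanding \emph{both} terms in the \emph{same} basis of mutually independent, zero-mean random vectors: the stationary sensor estimation error $\mathbf{e}^s_{s_k^v}$ at the oldest reference time $s_k^v$ (covariance $\hat{\mathbf{P}}^s$), together with the plant disturbances $\{\mathbf{w}_t\}$ and measurement noises $\{\mathbf{v}_t\}$ lying in the finite window between $s_k^v$ and $k$. For $\mathbf{x}_k$ this expansion is exactly \eqref{x_expression1}--\eqref{error_collection}: substitute \eqref{error_local_remote} for each $\mathbf{e}_{t_k^j}$ and propagate every sensor error $\mathbf{e}^s_{s_k^j}$ back to $\mathbf{e}^s_{s_k^v}$ through \eqref{error_sensor}, which is precisely what produces the coefficient matrices $\mathbf{D}_k$, $\check{\mathbf{E}}^i_{(k,n)}$, $\check{\mathbf{F}}^i_{(k,n)}$ of \eqref{P_k}. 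For $\mathbf{e}_k$ I would use \eqref{error_local_remote} with $\tau_k^0$ and then \eqref{error_sensor} with exponent $\Delta_k(0,v)$; its coefficient on $\mathbf{e}^s_{s_k^v}$ is $\mathbf{A}^{\tau_k^0}\mathbf{Z}^{\Delta_k(0,v)}$, and its coefficients on the disturbance/measurement-noise samples are the correction terms displayed in \eqref{eq:prop2_DEF}. Subtracting, the coefficient on $\mathbf{e}^s_{s_k^v}$ in $\hat{\mathbf{x}}_k$ becomes $\mathbf{D}_k-\mathbf{A}^{\tau_k^0}\mathbf{Z}^{\Delta_k(0,v)}=\tilde{\mathbf{D}}_k$, the disturbance/noise coefficients become $\dot{\mathbf{E}}^i_{(k,n)}$ and $\dot{\mathbf{F}}^i_{(k,n)}$, and there remain the extra $\Delta_k(0,1)$-indexed blocks that are not cancelled. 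Finally, because $\mathbf{e}^s_{s_k^v}$, the $\mathbf{w}_t$'s and the $\mathbf{v}_t$'s are mutually independent and zero-mean --- the independence of $\mathbf{e}^s_{s_k^v}$ from the subsequent disturbances and measurement noises being the standard orthogonality property of the Kalman-filter innovations --- all cross terms vanish and $\hat V(\mathbf{s}_k)$ collapses to a sum of terms $\mathbf{G}\,\mathbf{R}\,\mathbf{G}^\top$, one per building block, with $\mathbf{G}$ its coefficient matrix and $\mathbf{R}$ its covariance; collecting these gives \eqref{eq:V_hat}.

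I expect the main obstacle to be the bookkeeping in this last step. The disturbance index intervals $[s_k^j,t_k^j)$, $j=1,\dots,v$, appearing in the expansion of $\mathbf{x}_k$ overlap one another and also overlap the interval $[s_k^0,k)$ appearing in $\mathbf{e}_k$, so the coefficients of identical noise samples must be combined \emph{before} forming the covariance; in particular some combinations cancel (the most recent disturbances drop out of $\hat{\mathbf{x}}_k$, reflecting that the remote estimate does not accumulate fresh disturbance between sensor receptions). Getting this combination right --- rather than double counting --- is exactly what turns $\mathbf{D}_k,\check{\mathbf{E}},\check{\mathbf{F}}$ into the corrected $\tilde{\mathbf{D}}_k,\dot{\mathbf{E}},\dot{\mathbf{F}}$ of \eqref{eq:prop2_DEF} and fixes the summation limits in \eqref{eq:V_hat}; the remaining trace and covariance manipulations are of the same routine nature as in the proof of Proposition~\ref{pro_3}.
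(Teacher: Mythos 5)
Your proposal is correct and follows essentially the same route as the paper's Appendix~C: the trace reduction via \eqref{apply_u}, the relabeling argument showing $\mathbf{s}_{k+1-\eta_k}$ is determined by $\mathbf{s}_k$, and the decomposition $\hat{\mathbf{x}}_k=\mathbf{x}_k-\mathbf{e}_k$ expanded over the common basis $\{\mathbf{e}^s_{s_k^v},\mathbf{w}_t,\mathbf{v}_t\}$ so that the coefficients subtract to give $\tilde{\mathbf{D}}_k,\dot{\mathbf{E}}^i_{(k,n)},\dot{\mathbf{F}}^i_{(k,n)}$. You also correctly identify the key bookkeeping point — combining coefficients of identical noise samples before taking covariances, with the freshest disturbances (indices in $[s_k^0,k)$) cancelling out of $\hat{\mathbf{x}}_k$ — which is exactly what the paper's case analysis of $\tilde{\mathbf{E}}^i_k$ and $\tilde{\mathbf{F}}^i_k$ accomplishes.
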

The proofs of Propositions~\ref{pro_3} and~\ref{pro_4} are given in~\cite{liu2021DRL} due to the space limitation.
\setcounter{thm}{2}
\begin{rem}
	From Propositions~\ref{pro_3} and~\ref{pro_4}, the per-step cost of the plant is determined by the finite-length vector state $\mathbf{s}_k$. However, expressions for the cost functions are involved due to the sequential predictive control and the command buffer adopted at the actuator, as well as the consideration of plant disturbance, and local and remote estimation errors.
\end{rem}
By substituting plant index $i$ into the vector state and the cost functions \eqref{eq:Jx} and \eqref{eq:Ju},  the above results have opened the door to address the optimal scheduling problem of the $N$-plant-$M$-frequency WNCS with ETDC in~\eqref{quadratic cost} as a decision making problem with a countable state space.

\subsection{Resulting MDP}\label{sec:MDP}
From \eqref{eq:prob}, \eqref{tau_rules} and \eqref{eta_rules}, given the current state $\mathbf{s}_{i,k}$ and the current transmission scheduling action related to plant $i$, the next state, $\mathbf{s}_{i,k+1}$, is independent of all previous states and actions, satisfying the Markov property.
Thus,  the transmission scheduling problem of the WNCS can be formulated as an MDP:
\begin{itemize}
	\item The state of the MDP at time $k$ is
	$\mathbf{s}_k \triangleq (\mathbf{s}_{1,k}, \cdots, \mathbf{s}_{N,k})$, where $\mathbf{s}_{i,k}$ is as defined in \eqref{eq:state}, $i =1,\dots,N$.
	The state space is $\mathcal{S} = \underbrace{\mathbb{N}^{2v_1+2} \times \cdots \times \mathbb{N}^{2v_N+2}}_{N \; \text{terms}}$. 
	
	\item The action at time $k$, $\mathbf{a}_k  \triangleq  [a_{1,k},a_{2,k},\dots,a_{M,k}] =\pi(\mathbf{s}_k)$, is the transmission link allocation at each frequency, where $a_{m,k}\in\{-N,\dots,N\},\ m=1,\dots,M$, and	
	${a}_{m,k}\neq {a}_{m',k}$ if ${a}_{m,k}{a}_{m',k}\neq0$.
	Then, the action space $\mathcal{A} \subseteq \{-N,\dots,N\}^M$ has the cardinality of $|\mathcal{A}| = \sum_{m=0}^{M} \mathsf{C}^m_M \mathsf{P}^m_{2N}$. 
	
	\item The state transition probability $P(\mathbf{s}_{k+1}|\mathbf{s}_k,\mathbf{a}_k)$ can be obtained directly from the state updating rules in \eqref{eq:prob}, \eqref{tau_rules} and \eqref{eta_rules}.
	
	\item The per-step cost of the MDP is the sum cost of each plant in Propositions~\ref{pro_3} and~\ref{pro_4} as
	\begin{equation}\label{eq:per-step-cost}
	c(\mathbf{s}_k) \triangleq \sum_{i=1}^{N} c_i(\mathbf{s}_{i,k}),
	\end{equation}
	where $c_i(\mathbf{s}_{i,k}) \triangleq J^x_{i}(\mathbf{s}_{i,k}) + J^u_{i}(\mathbf{s}_{i,k})$.
	
	\item The discount factor is $\vartheta \in (0,1)$.
	
	\item The scheduling problem of the $N$-plant-$M$-frequency system can be rewritten as
	\begin{equation}\label{eq:problem}
	\min_{\pi} \mathbb{E}^\pi\left[\sum_{k=0}^{\infty} \vartheta^k c(\mathbf{s}_k)\right].
	\end{equation}
\end{itemize}

\begin{rem} \label{re:complexity}
	The discounted MDP problem above with an infinite state space can be numerically solved to some extent by using  classic policy or value iteration methods with a truncated state space $\mathcal{S}_L$ 
	\begin{equation}
	\mathcal{S}_L \triangleq \underbrace{\mathbb{N}_L^{2v_1+2} \times \cdots \times \mathbb{N}_L^{2v_N+2}}_{N \; \textrm{terms}}, \text{where } \mathbb{N}_L = \{1,\dots, L\}.
	\end{equation}
	The computation complexity of relative value iteration algorithm is given as $\mathcal{O}(|\mathcal{S}_L||\mathcal{A}|^2K)$~\cite{sennott2009stochastic}, where $K$ is the number of iteration steps, and the state space and action space sizes are $|\mathcal{S}| = L^{2\sum_{i=1}^{N}v_i+2N}$ and $|\mathcal{A}| = \sum_{m=0}^{M} \mathsf{C}^m_M \mathsf{P}^m_{2N}$, respectively.
	However, the sizes of both the state and action spaces are considerably large even for relatively small $N$ and $M$, leading to  numerical difficulties in finding a solution.
	In the literature of WNCS, even for the (simpler) $N$-plant-$M$-channel remote estimation system, only the $M=1$ case has been found to have a numerical  solution~\cite{Wu2018Auto}.
	To tackle the challenge for larger scale WNCS deployment, we will use DRL methods
	exploiting DNNs for function approximation in the following.
\end{rem}

\begin{rem}
Although the MDP problem formulation of the WNCS assumes static wireless channels with fixed packet drop probabilities, it can be extended to a Markov fading channel scenario. A Markov fading channel can have multiple channel states with different packet drop probabilities, and the channel state transition is modeled by a Markov chain~\cite{liu2020remote}. 
In this scenario, the state of the MDP problem should also include the channel state, and the state transition probability needs to take into account both the AoI state and the Markov channel state transition probabilities.
The details of WNCS scheduling over Markov fading channels can be investigated in our future work.
\end{rem}

\section{WNCS Scheduling with Deep Reinforcement Learning}\label{sec:RL}
Building on the MDP framework, DRL is widely applied in solving decision making problems with pre-designed state space, action space, per-step reward (cost) function and discount factor for achieving the maximum long-term reward \cite{10931147,10917000}. 
The main difference is that DRL does not exploit the state transition probability as required by MDP, but records and utilizes many sampled data sequences, including the current state $\mathbf{s}$, action $\mathbf{a}$, reward $r$ and next state $\mathbf{s}'$, to train DNNs for generating the optimal policy \cite{bertse19a,10876598}.
To find deterministic policies\footnote{The present work focuses on deterministic scheduling policies as it has been proved that the optimal policy is deterministic for unconstrained MDP problems~\cite{bertsekas2000dynamic}.}, the most widely considered DRL algorithms are DQN~\cite{DQN}, DDPG~\cite{lillicrap2015continuous}, and TD3~\cite{fujimoto2018addressing}. In this study, we applied three algorithms to address the scheduling problem.

DQN, which is the focus of our detailed algorithmic presentation, is the foundational algorithm in the field of DRL, primarily designed for environments with discrete action spaces. DDPG extends the ideas from DQN by adapting them to continuous action spaces using policy gradient methods. TD3, further building on DDPG, introduces improvements in terms of stability and performance. DQN serves as a foundation for understanding the basic principles that are also applicable to the more complex algorithms like DDPG and TD3, and is well-suited for our problem (Section~\ref{sec:MDP}) with the discrete and finite action space $\mathcal{A}$. In this regard, we choose to describe the DQN algorithm in detailed technical depth to avoid redundancy. Detailed pseudocode for DDPG and TD3, while omitted from the main text to conserve space, is available in~\cite{lillicrap2015continuous,fujimoto2018addressing} for interested readers. 

Compared with standard DQN (see Section~\ref{sec:origional}), the primary novelty of our approach lies in how it tackles the rapid growth of the scheduling action space. First, we restrict each plant to either uplink or downlink mode, guided by the practical observation that sensor information must be received before control commands are updated (see Section~\ref{sec:reduced}). Second, for scenarios involving many plants and frequencies, we embed these reduced actions into a continuous space by assigning a real-valued ‘priority score’ to each link (see Section~\ref{sec:ActionEmbed}). This allows advanced DRL algorithms like DDPG or TD3 to be applied. Consequently, although we build upon the established DQN methodology in Section~\ref{sec:origional}, our innovations in action space reduction and action embedding enable us to solve large-scale scheduling problems more effectively than a naive DQN approach could.

In general, we consider an episodic training scenario, where the system is operating in a simulated environment. 
Hence, one can learn and gather samples over multiple episodes to train a policy until convergence. Once training is completed offline, the properties of the final scheduling policy are verified so that one can “safely” deploy it at the central controller—rather than the sensors for real-time operation~\cite {lillicrap2015continuous}. Online operation requires only a simple forward pass through the pre-trained neural network—a computationally lightweight procedure. The sensors merely provide state measurements and other necessary feedback, thereby avoiding any undue computational burden for sensors.
We note that if the DRL policy does lead to instability (i.e., an unbounded expected cost), one needs to adjust the hyperparameters of the DNN (e.g., the initial network parameters and the number of layers and neurons) and start retraining. Such readjusting of hyperparameters of a deep learning agent is commonly adopted in practice~\cite{DQN}.

In the following, we present a deep-Q-learning approach for scheduler design that requires a sampled data sequence $\{(\mathbf{s,a},r,\mathbf{s'})\}$.
We note that the data sampling and the deep Q-learning are conducted offline. 
In particular, given the current state and action pair $(\mathbf{s,a})$, the reward $r$ can be obtained immediately from Propositions~\ref{pro_3} and~\ref{pro_4}, and the next state $\mathbf{s}'$ affected by the scheduling action and the packet dropouts can be sampled easily based on the packet error probabilities of each channel~\eqref{eq:prob} and state transition rules~\eqref{tau_rules} and \eqref{eta_rules}.
Furthermore, the trained policy needs to be tested offline to verify the WNCS's stability before an online deployment.

\subsection{Deep Q-Learning Approach}\label{sec:origional}
We introduce the state-action value function given a policy $\pi(\cdot)$, which is also called the $Q$ function~\cite{bertsekas2000dynamic}:
\begin{equation}
Q^\pi(\mathbf{s,a})\! \triangleq\! \mathbb{E}\left[\sum_{k=0}^{\infty} \vartheta^k r_k \vert \mathbf{s}_0=\!\mathbf{s}, \mathbf{a}_0=\!\mathbf{a},\mathbf{a}_{k}=\!\pi(\mathbf{s}_k),\forall k>0\right]\!,
\end{equation}
where $r_k \triangleq -c(\mathbf{s}_k)$ can be treated as the negative cost function in~\eqref{eq:per-step-cost} at $k$.
Then, by dropping out the time index $k$, the $Q$ function of the maximum average discounted total reward achieving policy $\pi^\star(\cdot)$ satisfies the Bellman equation as
\begin{equation}\label{eq:bellman}
Q^{\star}(\mathbf{s,a}) = \mathbb{E}\big[r + \vartheta \max_{\mathbf{a}'\in \mathcal{A}}Q^{\star}(\mathbf{s', a'})|\mathbf{s,a}\big].
\end{equation}
The optimal stationary and deterministic policy can be written as
\begin{equation}
\mathbf{a}^{\star}=\pi^\star(\mathbf{s}) \triangleq \arg\max_{\mathbf{a}\in \mathcal{A}}Q^{\star}(\mathbf{s,a}).
\end{equation}

Solving the optimal $Q$ function is the key to finding the optimal policy, but it is computationally intractable by conventional methods as discussed in Remark~\ref{re:complexity}. 
In contrast, deep Q-learning methods approximate $Q^\star(\mathbf{s, a})$ by 
a function $Q(\mathbf{s, a};\theta)$ parameterized by a set of neural network parameters,  $\theta$, (including both weights and biases), and then learns $\theta$ to minimize the difference between the left- and right-hand sides of~\eqref{eq:bellman}~\cite{sutton2018reinforcement}.
Deep Q-learning can be easily implemented by the most well-known machine learning framework, TensorFlow, with experience replay buffer, $\epsilon$-greedy exploration and mini-batch sampling techniques~\cite{TensorFlow}.

\subsection{Deep Q-Learning with Reduced Action Space}\label{sec:reduced}
In practice, the training of DQN converges and the trained DQN performs well for simple decision-making problems with a bounded reward function, small input state dimension and small action space, see e.g.,~\cite{TensorFlow}, where the state input is a length-$4$ vector and there are only $2$ possible actions. However, for the problem of interest in the current work, the reward function is unbounded (due to the potential of consecutive packet dropouts), the state  $\mathbf{s}$ is of high dimension, and action-space $|\mathcal{A}|$ is large.  Hence, the convergence of the DQN training is not guaranteed for all hyper-parameters, which include the initialization of $\theta$, the number of neural network layers, the number of neurons per layer, and the replay buffer and mini-batch sizes.\footnote{Note that when $N=3$, $M=3$ and $v_1=\dots=v_N=2$, the length of the   state is $18$ and the action space size is $229$ based on Section~\ref{sec:MDP}.}
Even if the convergence is achieved, due to the complexity of the problem, it may take very long training time and only converge to a local optimal parameter set $\theta$ (which is also affected by the choice of hyper-parameters), leading to a worse performance than some conventional scheduling policies. Whilst the hyper-parameters can, in principle, be chosen to enhance performance, no appropriate tuning guidelines exist for the problem at hand. 

\par To overcome the computational issues outlined above, in the following we will reduce the action space. This will make the training task simpler and enhance the convergence rate. To be more specific, for each plant system, instead of considering all the possible actions to schedule either or both of the uplink and downlink communications at each time instant, we restrict to the two modes \emph{downlink only} and \emph{uplink only}. Switching between these two modes of operation occurs once a scheduled transmission of that plant is successful.
\par The principle behind this is the intuition that the controller requires the information from the sensor first and then performs control, and so on and so forth.
The above concept can be incorporated into the current framework by adding the downlink/uplink indicator $s^{\text{link}}_{i,k}\in\{-1,1\},\forall i =1,\dots,N$,  into state $\mathbf{s}_k$ leading to the aggregated state $$\breve{\mathbf{s}}_k \triangleq [\mathbf{s}_k,s^{\text{link}}_{1,k},\dots,s^{\text{link}}_{N,k}],$$ where $s^{\text{link}}_{i,k} = 1$ or $-1$ indicates the uplink or downlink of plant $i$ can be scheduled at $k$. The state updating rule is
\begin{equation}\label{eq:s_link}
s^{\text{link}}_{i,k+1} = \begin{cases}
-1 &\text{ if } s^{\text{link}}_{i,k}=1 \text{ and } \beta_{i,k} =1\\
1 &\text{ if } s^{\text{link}}_{i,k}=-1 \text{ and } \gamma_{i,k} =1\\
s^{\text{link}}_{i,k} &\text{ otherwise.} \\
\end{cases}
\end{equation}

The new scheduling action at the $M$ frequencies is denoted as $\breve{\mathbf{a}}_k \triangleq [\breve{a}_{1,k},\dots,\breve{a}_{M,k}]$, where $\breve{a}_{m,k} \in \{0,\dots,N\},\; \forall m\in\{1,\dots,M\}$, and $\breve{a}_{m,k}\neq \breve{a}_{m',k}$ if $\breve{a}_{m,k}\breve{a}_{m',k}\neq0$.
If $\breve{a}_{m,k} =i$ and $s^{\text{link}}_{i,k} =1$, then the uplink of plant $i$ is scheduled on frequency $m$ at time $k$.
Compared with the original action $\mathbf{a}_k$ in Section~\ref{sec:MDP}, the size of the new action space is reduced significantly to $|\tilde{\mathcal{A}}| = \sum_{m=0}^{M} \mathsf{C}^m_M \mathsf{P}^m_{N}$.
For example, 
when $N=3$, $M=3$ and $v_1=v_2=v_3=2$, the action space size is reduced from $229$ to $34$.
Building on this reduced action space, the approach for solving problem~\eqref{eq:problem} is given as Algorithm~\ref{algor:1}.
We note that the reduced action space still rapidly increases with the increment of $N$ and $M$. This makes it difficult for the DQN-based algorithm to solve the scheduling problem of larger-scale WNCS, such as $N = 10$ and $M = 10$ with a discrete action space of $234662231$.

\begin{algorithm}[t] 
	\begin{algorithmic}[1] 
		\State Initialize experience replay buffer $\mathcal{B}$ to capacity $K$
		\State Initialize multi-layer fully connected  neural network $Q$ with vector input $\breve{\mathbf{s}}$, $|\tilde{\mathcal{A}}|$ outputs $\{Q(\breve{\mathbf{s}},\breve{\mathbf{a}}_1;\theta_0),\dots,Q(\breve{\mathbf{s}},\breve{\mathbf{a}}_{|\mathcal{A}|};\theta_0)\}$ and random parameter set $\theta_0$	
		\For{episode = $1, \cdots, E$} 
		\State Randomly initialize $\mathbf{s}_0$, and convert $\mathbf{s}_0$ to $\breve{\mathbf{s}}_0$ 
		\For{$t = 0, 1, \cdots, T$}
		\State With probability $\epsilon$ select a random action $\breve{\mathbf{a}}_t$, otherwise select $\breve{\mathbf{a}}_t = \arg\max_{\breve{\mathbf{a}}\in\tilde{\mathcal{A}}}Q(\breve{\mathbf{s}}_t, \breve{\mathbf{a}};\theta_t)$
		\State Convert $\breve{\mathbf{a}}_t$ to $\mathbf{a}_t$ 
        \State Execute $\mathbf{a}_t$, and obtain $r_t$ and $\mathbf{s}_{t+1}$
        \State Convert $\mathbf{s}_{t+1}$ to $\breve{\mathbf{s}}_{t+1}$
		\State Store $(\breve{\mathbf{s}}_t, \breve{\mathbf{a}}_t, r_t, \breve{\mathbf{s}}_{t+1})$ in $\mathcal{B}$
		\State Sample random mini-batch of $l$ transitions $(\breve{\mathbf{s}}_t, \breve{\mathbf{a}}_t, r_t, \breve{\mathbf{s}}_{t+1})$ from $\mathcal{B}$ as $\tilde{\mathcal{B}}$
		\State Set $z_j = r_j + \vartheta \max_{\breve{\mathbf{a}}'\in \tilde{\mathcal{A}}}\hat{Q}(\breve{\mathbf{s}}_{j+1}, \breve{\mathbf{a}}';\theta_t)$ for each sample in $\tilde{\mathcal{B}}$
		\State Perform a mini-batch gradient descend step to minimize the Bellman error, i.e., $\min_{\hat{\theta}}\sum_{\tilde{\mathcal{B}}}(z_j - Q(\breve{\mathbf{s}}_j, \breve{\mathbf{a}}_j;\hat{\theta}))^2$
		\State Update $\theta_{t+1}=\hat{\theta}$
		\EndFor 
		\EndFor 
	\end{algorithmic} 
	\caption{Deep Q-learning with reduced action space for transmission scheduling in WNCS} 
	\label{algor:1}
\end{algorithm}

\subsection{Action Embedding for DDPG and TD3} \label{sec:ActionEmbed}
To further handle the issue of large discrete action space, the discrete actions can be embedded into a continuous action space. This approach uses a vector of continuous values to represent the priority of allocating channels to plants. Specifically, each plant is assigned a continuous value. Plants are then allocated channels based on the ranking of these continuous values. For instance, the plant with the highest value is allocated to the first channel, the second highest to the second channel, and so on. This method significantly reduces the complexity of the discrete action space, making it feasible to handle larger $N$ and $M$. For example, the number of output neurons is $6$ when $N = 6$ and $M = 4$, while the original discrete action space is $1045$. The continuous action space allows for more efficient learning and optimization by leveraging the capabilities of DRL algorithms designed for continuous spaces.
We will illustrate the numerical results of using DRL for solving the scheduling problem in the following section.

\section{Numerical Results}
\subsection{Experiment Setup}
We adopt the following hyper-parameters for the DRL algorithms. The input state dimension of DQN is set to $2N(v_i+1)$. We use three hidden layers with $300$, $200$, $100$ neurons, respectively. The output layer of DQN has $|\tilde{\mathcal{A}}| = \sum_{m=0}^{M} \mathsf{C}^m_M \mathsf{P}^m_{N}$ outputs (actions) for the problem with reduced action space (see Section~\ref{sec:reduced}).
The activation function in each hidden layer is the ReLU~\cite{TensorFlow}. The activation function in the output layer of DDPG and TD3 is the Sigmoid~\cite{TensorFlow}. 
 The experience replay memory has a size of $K=100000$, and the mini-batch size is $64$. 
So in each time step, $64$ data is sampled from the replay memory for training. The exploration parameter $\epsilon$ of DQN decreases from $1$ to $0.01$ at the rate of $0.999$ after each time step. 
The exploration method for DDPG and TD3 is based on Gaussian action noise with a standard deviation of $0.2$ and a mean value of $0$.
We adopt the ADAM optimizer to update the DNNs.
Each training takes $E= 500$ episodes, each including $T = 500$ time steps. We test the trained policy for $100$ episodes and compare the empirical average costs $\frac{1}{T} \sum_{k=1}^{T} \sum_{i=1}^{N}c(\mathbf{s}_{i,k})$.

We use Algorithm~\ref{algor:1} (DQN) and its adapted version, i.e., DDPG and TD3, to solve the scheduling problem with reduced action space and compare it with the original DQN, DDPG and TD3, and three heuristic benchmark policies.
1) \textbf{Random policy}: randomly choose $M$ out of the $2N$ links and randomly allocate them to the $M$ frequencies at each time.
2) \textbf{Round-robin policy}: the $2N$ links are divided into $M$ groups, and the links in each of the groups $m\in\{1,\dots,M\}$ are allocated to the corresponding frequency $m$; the transmission scheduling at each frequency follows the round-robin fashion~\cite{LEONG2020108759} at each time. Note that in the simulation, we test all link-grouping combinations and only present the one with the lowest average cost.
3) \textbf{Greedy policy}: At each time, the $M$ frequencies are randomly allocated to $M$ of the $2N$ links with the largest AoI value in $\{\tau_{1,k},\eta_{1,k},\dots,\tau_{N,k},\eta_{N,k}\}$. Recall that $\tau_{i,k}$ and $\eta_{i,k}$ denote the time duration since the last received sensing and control packets of plant $i$, respectively.

\subsection{Performance Comparison}
\begin{figure}[t]
	\centering\includegraphics[scale=0.48]{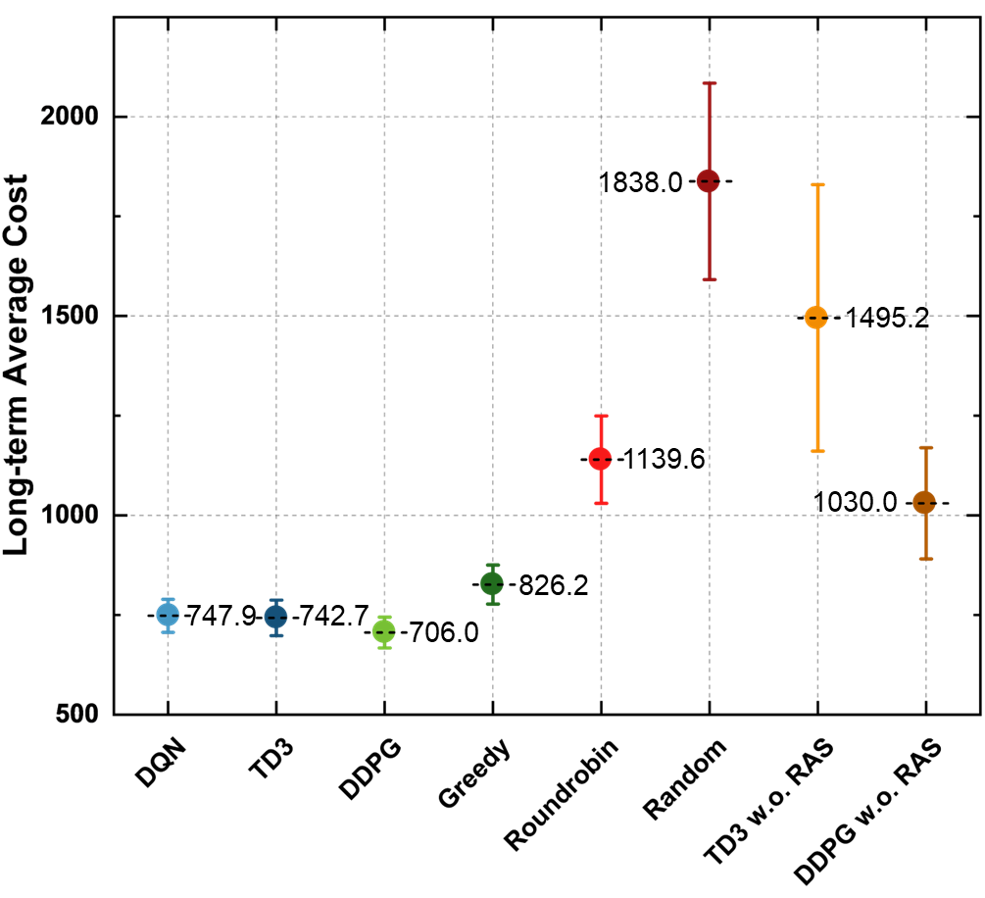}
	\caption{The long-term average performance of DRL and benchmark algorithms over a system with $N = 5$ and $M = 5$. Values are means $\pm$ standard error of mean. RAS denotes reduced action space.}
	\label{fig:sim1}
\end{figure}
In the comparison experiment, the plant system matrices $\mathbf{A}_i$ are randomly generated, in which case $\rho(\mathbf{A}_i) \in (1.0,1.1)$.
The control input matrices are all set to $\mathbf{B}_i = [1\ 1]^\top$. The measurement matrix $\mathbf{C}_i$ is equal to the identity matrix. The covariance matrices are $\mathbf{Q}^w_i = \mathbf{Q}^v_i = 0.1 \mathbf{I}$. Each plant is $2$-step controllable, i.e., $v_i=2$.
The packet success probabilities of each link, i.e., $\xi^s_{m,i}$ and $\xi^c_{m,i}$, are generated randomly and drawn uniformly from $(0.5,1.0)$. 
The weighting terms $\mathbf{S}^x_{i}$ and $\mathbf{S}^u_{i}$ are chosen as identity matrices and the discount factor is $\vartheta = 0.95$.
The resulting WNCS satisfies the stability condition established in Theorem \ref{theory:stability}.
We first consider a $N$-plant-$M$-frequency WNCS with $N = 5$ and $M = 5$, i.e., $10$ links sharing $5$ frequencies with a reduced discrete action space of $1546$. Then, we will extend to larger scales, such as $N = 10$ and $M = 10$, i.e., $20$ links sharing $10$ frequencies with a reduced discrete action space of $234662231$.

Fig.~\ref{fig:sim1} showcases the comparison of three DRL algorithms and the three heuristic benchmark policies when $N = 5$ and $M = 5$. All the DRL algorithms with the reduced action space, including DQN, DDPG, and TD3, achieve a lower long-term average cost than the best heuristic benchmark policy, i.e., Greedy Policy. We note that DRL is a dynamic decision-making algorithm designed to maximize the cumulative reward over time, inherently aligning with the goal of optimizing long-term costs. It evaluates actions based on their expected future rewards, considering both immediate and delayed consequences. Heuristic benchmark policies are based on static decision rules and lack the ability to adapt to changing environments or evolving system dynamics. They often prioritize immediate rewards or costs, making decisions that optimize short-term performance without considering long-term implications. They might select actions that yield immediate gains but could lead to suboptimal long-term outcomes, such as increased future costs. Thus, DRL algorithms show positive performance.

\begin{figure}[t]
	\centering\includegraphics[scale=0.48]{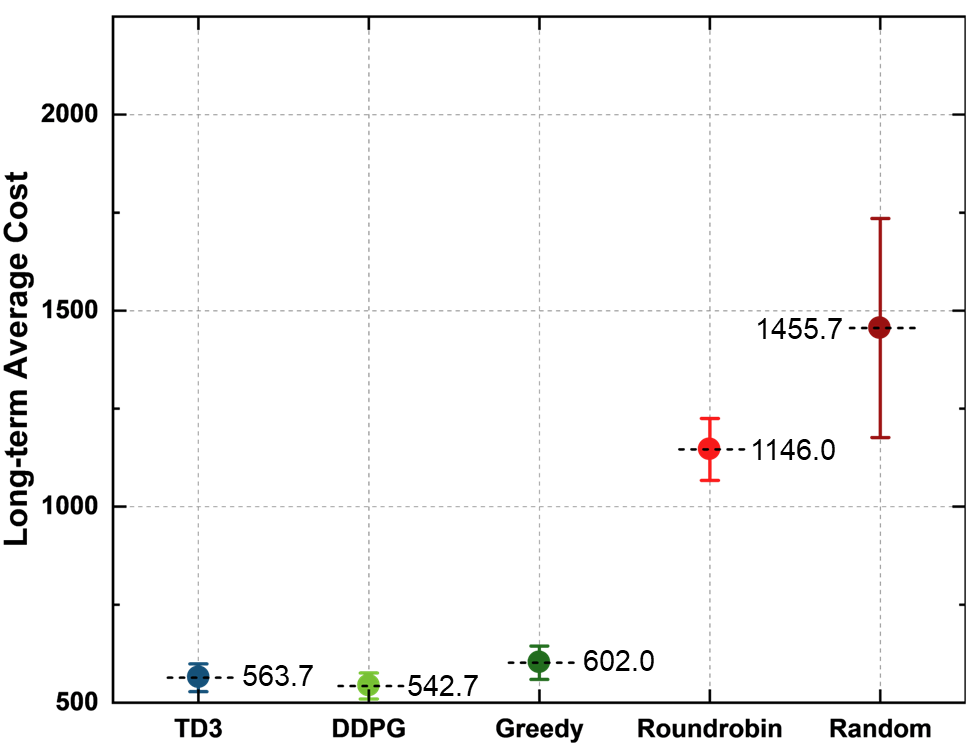}	
	\caption{The long-term average performance of DRL and benchmark algorithms over a system with $N = 10$ and $M = 10$. Values are means $\pm$ standard error of mean.}
	\label{fig:sim2}
\end{figure}

In addition, Fig.~\ref{fig:sim1} demonstrates that the well-trained DQN-based policy reduces the long-term average cost of greedy policy by $9.5$\% from $826.2$ to $747.9$. The DDPG and TD3 algorithms based on the action embedding method achieve similar or slightly better performance compared to the DQN-based algorithm. This is attributed to the large discrete action space, which makes it difficult for the DQN-based algorithm to explore adequately during training and to get the optimal policy after training. The DDPG algorithm reduces the long-term average cost of the greedy policy by $14.5$\% from $826.2$ to $706.0$. TD3 is an extension of DDPG, which is designed to reduce overestimation bias and variance in the learning process by introducing delayed policy updates, target policy smoothing, and twin critic networks. These features can lead to overly conservative behavior in WNCS with high stochasticity, which potentially slows down the learning process or leads to underfitting where the system fails to adequately capture the beneficial actions amidst the noise. Thus, DDPG, as a more aggressive approach, is better than TD3.

Fig.~\ref{fig:sim1} also shows that the DQN algorithm without reduced action space (RAS) has an action space size of $63591$, which is unable to solve the original MDP problem with such a large action space. The DDPG and TD3 algorithms without RAS cannot outperform the ones with RAS. These results illustrate the effectiveness of the proposed action space reduction method.

Fig.~\ref{fig:sim2} and Fig.~\ref{fig:sim3} showcase the comparison of three DRL algorithms and the three heuristic benchmark policies when the system scale is large. In these cases, the DQN-based algorithm is unable to solve the MDP, as the reduced discrete action space is still too large. Results show that the DDPG and TD3 algorithms based on the action embedding method effectively address the challenge of the rapidly growing action space, making it more suitable for large-scale WNCS problems. The DDPG algorithm reduces the long-term average cost of the greedy policy by $9.9$\% from $602.0$ to $542.7$ when $N = 10$ and $M = 10$. As the communication resources become relatively scarce, the performance gain dramatically increases to $39.9$\% (from $515.9$ to $309.6$), when $N = 8$ and $M = 6$, demonstrating the effectiveness of the proposed framework for more challenging scenarios.

\begin{figure}[t]
	\centering\includegraphics[scale=0.48]{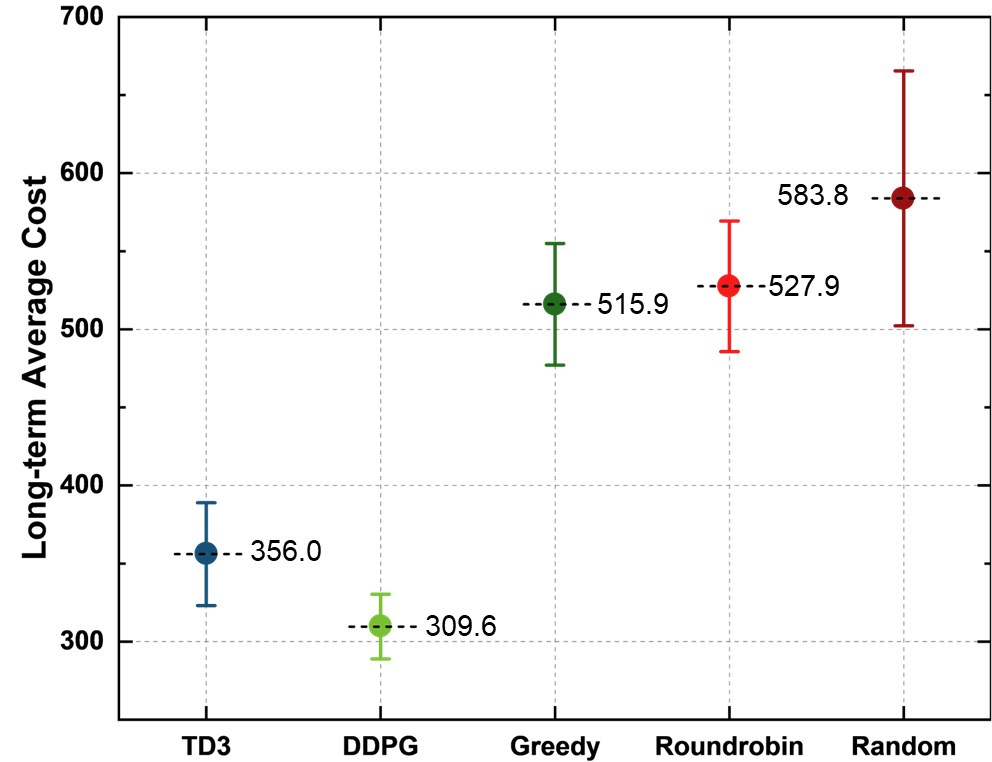}	
	\caption{The long-term average performance of DRL and benchmark algorithms over a system with $N = 8$ and $M = 6$. Values are means $\pm$ standard error of mean.}
	\label{fig:sim3}
\end{figure}

\subsection{Practical Example}

\begin{table}[t]
\centering
\footnotesize
\setlength\tabcolsep{3pt}
\caption{The Parameters of Pendulums}
\vspace{-0.3cm}
\begin{tabular}{m{2.8cm}m{0.5cm}m{0.5cm}m{0.5cm}m{0.5cm}m{0.5cm}m{0.5cm}m{0.5cm}m{0.5cm}}
\hline\hline
\textbf{Pendulum \#$i$} & 1 & 2 & 3  & 4 & 5  & 6 & 7 & 8\\ \hline\hline
Pole mass, $m_{p,i}$, [kg] & 1  & 1.2 & 1.4 & 1.6 & 1.8 & 2 & 2.2 & 2.4\\ 
Pole length, $l_{p,i}$, [m] & 1 & 1.1 & 1.2 &  1.3 & 1.4 & 1.5 & 1.6 & 1.7\\\hline\hline
\end{tabular}
\label{tab:PendulumParameters}
\vspace{-0.0cm}
\end{table}

We simulate a WNCS where 8 pendulums are spatially distributed and connected with a central controller via 6 frequency channels, as shown in Fig.~\ref{fig:PendulumCtrl}(a). The parameters of these pendulums, including pole mass $m_{p,i}$ and pole length $l_{p,i}$ for each pendulum $i$, are shown in Table~\ref{tab:PendulumParameters}. For pendulum $i$, the two-dimensional state vector $\mathbf{x}_{i,k}$, includes the pole angle state and the pole angular velocity state. The scaler control input $u_{i,k}$ is the applied torque. Each pendulum $i$ is linearized around the upright equilibrium using a small-angle approximation. The corresponding state transition matrix $\mathbf{A}_i$ and control input matrix $\mathbf{B}_i$ are given as \cite{10107379}
\begin{equation}
    \mathbf{A}_{i} =  \left[\begin{array}{cc}
         1 & T_s \\
         \frac{g}{l_{p,i}}T_s & 1
    \end{array} \right], \mathbf{B}_i =  \left[\begin{array}{c}
         T_s \\
         \frac{T_s}{m_{p,i}l_{p,i}^2}
    \end{array} \right],
\end{equation}
where $T_s = 0.01$ is the sampling time period; $g = 9.81$ is the gravitational acceleration. 
The measurement matrix $\mathbf{C}_i$ is equal to the identity matrix. The covariance matrices are $\mathbf{Q}^w_i = \mathbf{Q}^v_i = 0.001 \mathbf{I}$. Each pendulum is $2$-step controllable, i.e., $v_i=2$.

The packet success probabilities of each link, i.e., $\xi^{+}_{m,i}, \forall + \in \{s,c\}$, are simulated in the following. 
Let 
$\mathcal{F}_c \triangleq \{615\text{MHz}, 675\text{MHz}, 735\text{MHz}, 795\text{MHz}, 855\text{MHz}, 915\text{MHz}\}$ denote the available frequency bands. For links between each pendulum $i$ and the central controller, the free-space path loss model is $\bar{h}_i = A(\frac{3\times10^8}{4\pi f_id_i})^{d_e}$, where $A = 4.11$ denotes the antenna gain; $f_i \in \mathcal{F}_c$ denotes the carrier frequency; $d_i$ denotes the distance from the pendulum to the central controller and is randomly drawn from $[180 \text{m}, 200 \text{m}]$; $d_e = 2.2$ denotes the path loss exponent \cite{PathLossModel}. The average channel power gains for each link, $\mu^s_{m,i}$ and $\mu^c_{m,i}$, are randomly drawn from $[0.5, 1.5]$. Given the transmission power $P_{\text{tx}} = 23 \text{dBm}$ and the receiving noise power $\sigma^2 = -60 \text{dBm}$, the signal-to-noise ratio (SNR) of received packets in each link is obtained from $\gamma_{m,i}^{+} = \frac{\bar{h}_i\mu^{+}_{m,i} P_{\text{tx}}}{\sigma^2}$.

Given the packet length $l = 200$ (i.e., the number of symbols per packet), the number of data bits $b = 400$ in the packet, and the SNR $\gamma_{m,i}^{+}$ of the packet, we have the approximated decoding error probability of a packet as \cite{PangFBL}
\begin{equation}\label{BLER:NoHARQ}
\varepsilon\left(\gamma_{m,i}^{+}\right) \approx \mathcal{Q}\left(\frac{\mathcal{C}\left(\gamma_{m,i}^{+}\right)-\frac{b}{l}}{\sqrt{\frac{\mathcal{V}\left(\gamma_{m,i}^{+}\right)}{l}}}\right),
\end{equation}
where $\mathcal{C}(\gamma_{m,i}^{+})=\log_2{(1+\gamma_{m,i}^{+})}$ and $\mathcal{V}(\gamma_{m,i}^{+})=(1-(1+\gamma_{m,i}^{+})^{-2})(\log_2{e})^2$ are the Shannon capacity and the channel dispersion, respectively, and $\mathcal{Q}(x)=(\frac{1}{\sqrt{2\pi}})\int_{x}^{\infty}{e^{-\frac{t^2}{2}}\text{d}t}$ is the Gaussian Q-function. Then, packet success probabilities of each link can be expressed as $\xi^{+}_{m,i} = 1-\varepsilon\left(\gamma_{m,i}^{+}\right)$.
The resulting WNCS satisfies the stability condition established in Theorem \ref{theory:stability}.

The DDPG-based scheduler is trained with $\mathbf{S}^x_{i} = \left[\begin{array}{cc} 100 & 0 \\ 0 & 1\end{array} \right]$, $\mathbf{S}^u_{i} = 0.01$, and $\vartheta = 0.95$. After training, the control performance is shown in Fig.~\ref{fig:PendulumCtrl}(b). The pole angle of all pendulums is initialized at the down position. Results show that the pole angle of all pendulums can converge to zero, illustrating the effectiveness of the DDPG-based scheduler for stabilizing the pendulum system.

\begin{figure}[t]
    \centering
    \includegraphics[width=3.5in]{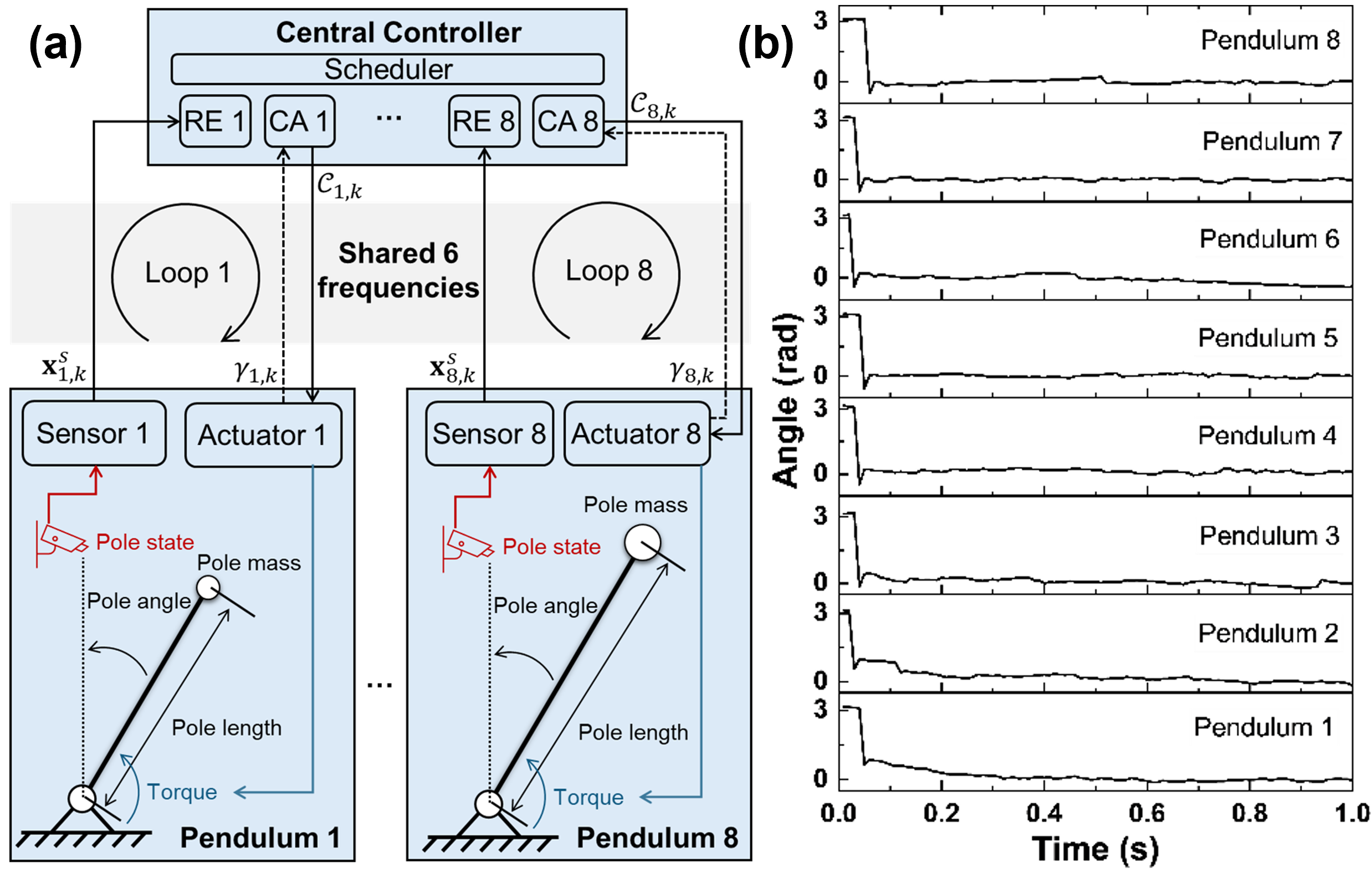}
    \vspace{-0.8cm}
    \caption{DDPG-based transmission scheduling for distributed control of the pendulum system: (a) Illustration of core components. (b) Testing results of the pole angle.} 
    \label{fig:PendulumCtrl}
    \vspace{0.0cm}
\end{figure}

\section{Conclusions}
We have investigated the transmission scheduling problem of the fully distributed WNCS.
A sufficient stability condition of the WNCS in terms of both the control and communication system parameters has been derived.
We have proposed advanced DRL algorithms to solve the scheduling problem, which perform much better than benchmark policies.
For future work, we will develop a distributed DRL approach for enhancing the scalability of the scheduling algorithm. Furthermore, we will consider co-design problems of the estimator, the controller, and the scheduler for distributed WNCSs with time-varying channel conditions.

\balance
\bibliographystyle{ieeetr}
%

\end{document}